\documentclass[lettersize,journal]{IEEEtran}
\usepackage{array}
\usepackage[caption=false, font=normalsize, labelfont=sf, textfont=sf]{subfig}
\usepackage{textcomp}
\usepackage{url}
\usepackage{amsthm}
\usepackage{tikz}
\usepackage{verbatim}
\usepackage{diagbox}
\usepackage{cite}
\usepackage{amsmath,amssymb,amsfonts}
\usepackage{algorithmic}
\usepackage{graphicx}
\usepackage[ruled,linesnumbered]{algorithm2e}
\usepackage{textcomp}
\usepackage{multirow}
\usepackage{xcolor}
\usepackage{stfloats}

\newtheorem{remark}{Remark}
\newtheorem{definition}{Definition}
\newtheorem{proposition}{Proposition}

\newtheorem{example}{Example}
\newcounter{problem}
\renewcommand{\theproblem}{\textbf{P\arabic{problem}}}
\allowdisplaybreaks

\begin{document}

\title{Positional Attention-based Graph Neural Network for Learning Permutation Non-equivariant Wireless Policies}

\author{Baichuan Zhao, Chenyang Yang, Jianyu Zhao and Di Zhang
\thanks{The authors are with the School of Electronics and Information Engineering, Beihang University, Beijing 100191, China (E-mail: \{zhaobaichuan; cyyang; di\_zhang\}@buaa.edu.cn).}
}

\maketitle

\textcolor{blue}{ABSTRACT, INTRODUCTION AND CONCLUSION REMAIN UNREVISED.}

\begin{abstract}
Graph neural networks (GNNs) have emerged as a promising approach to learning wireless policies efficiently by leveraging topology prior and incorporating relational inductive biases.
\textcolor{black}{However, when the optimal policy is not permutation equivariant (PE), conventional GNNs suffer from mismatched inductive biases, leading to degraded performance or poor generalizability.}
This issue arises in wireless tasks with expected objectives, such as channel estimation and end-to-end (E2E) precoding, where the PE property of the optimal policy depends on the underlying channel distribution.
In this paper, we propose a novel positional attention-based GNN to learn permutation non-equivariant policies efficiently.
The core idea is to incorporate relative positions of vertices into the attention mechanism via an embedding function, enabling the GNNs to capture asymmetric relationships.
\textcolor{black}{Consequently, the proposed GNN can represent permutation non-equivariant functions, while retaining high learning efficiency and size generalizability through parameter sharing.}
We consider channel estimation and E2E precoding as case studies, and prove that their policies are PE to users but not to antennas under spatially correlated channels.
We employ the proposed GNN to learn the policies, where the embedding function is designed based on the channel covariance matrix.
Simulation results demonstrate that the proposed GNN outperforms existing channel estimation and E2E precoding methods, requires fewer samples for training, and can be generalized to systems with different numbers of antennas and users.
\end{abstract}

\begin{IEEEkeywords}
graph neural network, permutation equivariant, positional attention, channel estimation, end-to-end precoding
\end{IEEEkeywords}

\section{Introduction}

Graph neural networks (GNNs) have recently emerged as a powerful approach for learning policies efficiently in wireless communication systems.
\textcolor{black}{By introducing parameter sharing schemes, GNNs can harness the permutation equivariant (PE) property of wireless policies, thereby reducing the hypothesis space (i.e., the set of functions representable by the GNN) and training complexity \cite{zhao2024understanding}.}
Moreover, the parameter sharing scheme allows a well-trained GNN to infer on different problem sizes without retraining \cite{lu2024graph}.
To improve the expressive power of GNNs, i.e., to enlarge their hypothesis space, attention-based GNNs have been proposed \cite{velickovic2017graph, dwivedi2020generalization}, which have been applied to a wide range of wireless communication tasks \cite{huang2024sub, yang2024graph, liu2024multidimensional, guo2024recursive, mishra2024graph, hu2024unsupervised, wang2024graph} and achieved better performance.

Despite the wide application of GNNs, existing GNN architectures face an inherent trade-off between learning efficiency and performance when the optimal policy is permutation non-equivariant, i.e., its output is not equivariant to input permutations along some dimensions.
On one hand, \textcolor{black}{conventional GNNs adopt parameter sharing to improve learning efficiency, which, however, inherently enforces permutation equivariance, leading to a biased hypothesis space and degraded performance.}
On the other hand, removing parameter sharing enlarges the hypothesis space but incurs higher training complexity and hinders generalization across varying system sizes.

In wireless communication systems, this trade-off arises when the optimization objective of a wireless problem involves mathematical expectations, as to be shown later.
For instance, in channel estimation, the widely used mean square error (MSE) metric contains an expectation over the true channel.
Similarly, in precoding problems without accurate CSI, such as robust precoding \cite{shi2021deep} and end-to-end (E2E) precoding with implicit channel acquisition \cite{jiang2021learning}, precoders are designed by optimizing average performance metrics.
In these cases, the optimal solution depends on the underlying distribution, but whether the resulting policy exhibits permutation property under specific distributions remains unexplored.
As a result, when the policies are not PE, conventional GNN architectures may struggle to achieve satisfactory performance with low training complexity.

\subsection{Related Works}

\subsubsection{Attention-based GNNs for Wireless Communications}

Attention mechanisms allow GNNs to adaptively adjust the importance of neighboring vertices, enabling them to focus on most relevant neighbors and better capture dependencies during aggregation \cite{velickovic2017graph}.
Motivated by these advantages, attention-based GNNs has been used for various applications in wireless communications, such as precoding \cite{huang2024sub, yang2024graph, liu2024multidimensional, guo2024recursive}, power control \cite{mishra2024graph} and power allocation \cite{hu2024unsupervised, wang2024graph}.

To better understand the limitations of existing attention mechanisms, we review how attention coefficients are typically computed and what information they depend on.
In terms of computation, existing mechanisms adopt either the dot-product attention introduced in the Transformer architecture \cite{vaswani2017attention} or an FNN-based approach as in \cite{bahdanau2014neural}.
The dot-product attention mechanism was incorporated into GNNs to measure the similarity between vertex features during aggregation in \cite{mishra2024graph, guo2023model}.
To enhance expressive power, the attention coefficient was extended from a scalar to a vector in \cite{liu2024multidimensional}, and the dot-product was replaced with a Hadamard product.
For FNN-based attention, one well-known example is the Graph Attention Network (GAT) \cite{velickovic2017graph}, which integrates the attention mechanism in \cite{bahdanau2014neural} into a homogeneous GNN and has been used for precoding \cite{yang2024graph}.
More recent works has extended the GAT to heterogeneous graphs for various wireless tasks, such as precoding \cite{guo2024recursive} and power allocation \cite{wang2024graph}.

From the perspective of input, most existing attention mechanisms compute attention coefficients based solely on the hidden representations from the previous layer \cite{mishra2024graph, liu2024multidimensional, velickovic2017graph, yang2024graph, guo2024recursive, wang2024graph}.
The GNN proposed in \cite{guo2023model} is an exception, where both hidden representations and input features were used.
In all cases, the inputs used to compute attentions remain equivariant under vertex permutations, and thus these attention mechanisms preserve the permutation properties of the underlying GNN architecture, as will be discussed later.
Therefore, such mechanisms do not enable the model to represent permutation non-equivariant functions, motivating the design of new attention mechanisms.

\subsubsection{Channel Estimation and E2E Precoding}

In the following, we briefly review existing methods used for channel estimation and E2E precoding, which serve as two case studies considered in this work.

\underline{\it Channel Estimation:}
To improve estimation accuracy by exploiting the sparsity of wireless channels, iterative algorithms such as orthogonal matching pursuit (OMP) \cite{lee2016channel}, approximate message passing (AMP) \cite{donoho2009message}, and sparse Bayesian learning (SBL) \cite{wipf2004sparse} were proposed, at the cost of increased computational complexity.
The algorithms were then used in \cite{liu2024sparse, ruan2023simplified, yang2024regularized}.
To further enhance accuracy and reduce the computational burden, deep neural networks (DNNs) have been introduced for channel estimation.
A line of work designed model-driven DNNs by unfolding iterative algorithms such as AMP and SBL \cite{ruan2023simplified, lv2025sparse}.
However, their performance is limited by the original algorithms and the computational complexity remains high.
Alternatively, data-driven DNNs do not rely on mathematical models, making them more flexible for various tasks and faster at inference.
In \cite{jiang2021dual} and \cite{gao2021attention}, convolutional neural networks (CNNs) were used to capture channel correlations, where angular-domain information was leveraged to enhance estimation accuracy.
To improve scalability and generalizability, GNNs have recently been introduced.
In \cite{ye2024gnn}, a GNN was used to estimate channels in reconfigurable intelligent surface (RIS)-aided systems from received sounding reference signals (SRSs).
In \cite{singh2023channel}, the full channel matrix was recovered from received SRSs of a subset of RIS element groups.
To capture the spatial correlations among different RIS elements, each group was modeled as a vertex in a graph, and a GNN cascaded with a transformer was employed.
Channel recovery was considered in \cite{liu2024pd}, where pilots transmitted over partial time and frequency resources were treated as different tokens and fed into a Transformer.

\underline{\it E2E Precoding:} To improve system performance, several works have proposed to employ DNNs for optimizing precoding in an E2E manner.
In frequency division duplex systems, this involves joint design of pilots, quantization feedback and precoding \cite{sohrabi2021deep, guo2024deep}.
In time division duplex (TDD) systems, downlink precoders are learned directly from uplink SRSs by leveraging channel reciprocity \cite{chen2024joint, park2024end, jiang2022accurate}, which is also the focus or our work.
Specifically, FNNs were employed in \cite{chen2024joint} and \cite{park2024end}, while a Transformer that treats SRSs from different subframes as different tokens was adopted in \cite{jiang2022accurate} to capture temporal correlation.
Although with improved performance, these DNNs can not be generalized to varying numbers of users, since their numbers of trainable parameters depend on that of users.
Therefore, they can not be applied for practical wireless systems where the user number changes dynamically.
To tackle the challenge, several works have employed GNNs for E2E precoding.
In \cite{jiang2021learning}, both the precoder at base station (BS) and the phase shifts of RIS were jointly optimized, while in \cite{wang2024learning} and \cite{linfu2025graph}, hybrid analog and digital precoders were jointly optimized for mmWave systems.
The GNNs designed in these works are PE to users, and can infer in scenarios with different numbers of users.

\underline{\it Summary:} Aforementioned works using GNNs for channel estimation, i.e., \cite{liu2024pd, ye2024gnn}, or E2E precoding, i.e., \cite{jiang2021learning, wang2024learning, linfu2025graph}, did not examine whether the permutation properties of the GNNs are consistent with the policies.
Specifically, in \cite{liu2024pd}, the graph Transformer was followed by an FNN, and its learned policy is without permutation property.
The GNNs in \cite{ye2024gnn, jiang2021learning, wang2024learning, linfu2025graph} are PE to users, enabling inference across varying numbers of users.
However, these works did not analyze whether the policies are PE to antennas, nor did they introduce parameter sharing along the antenna dimension.
This may be because performance degradation was empirically observed after sharing parameters.
Consequently, the number of trainable parameters grows with that of antennas, leading to high sample complexity for large antenna arrays and poor generalizability across antenna dimensions.

\subsection{Motivation and Contribution}

When learning permutation non-equivariant policies, existing GNNs face a trade-off between learning efficiency and performance.
For wireless problems involving expected objectives, such as channel estimation and E2E precoding, the permutation properties of the corresponding policies depend on the underlying channel distributions, whereas the dependency has not been investigated yet.
Under certain distributions, these policies are not PE and can not be efficiently learned by existing GNNs.


In this paper, we address this issue by proposing a novel positional attention-based GNN to learn permutation non-equivariant policies efficiently.
We consider learning wireless policies in multi-user multi-input single-output (MU-MISO) systems, while the approach can be extended to other communication systems.

Our main contributions are summarized as follows.

\begin{itemize}
\item We formulate a class of wireless problems whose objective involves an expectation, and analyze how the underlying probability distribution affects the PE property of the resulting policy.
We identify the conditions under which the policy exhibits permutation property.

\item We propose a positional attention mechanism for GNNs to learn permutation non-equivariant policies.
Unlike existing attention mechanisms, our method computes attention coefficients via an embedding function, which encodes the relative positions of vertices based on their indices.
This design enables the GNN to better capture correlations induced by, e.g., the physical locations of entities in communication systems, and allows it to represent permutation non-equivariant functions.

\item Using channel estimation and E2E precoding as case studies, we prove that the corresponding policies are PE to users but not to antennas in spatially correlated channels.
Inspired by the iterative approximation of linear minimum mean square error (LMMSE) estimator, we design the embedding function by resorting to the structure of channel covariance, and the resulting weight matrix of the GNN exhibits the same structure as the channel covariance matrix.

\item Extensive simulations show that the proposed GNN can achieve better performance than state-of-the-art counterparts for channel estimation and E2E precoding.
Moreover, it requires much fewer samples to achieve the same performance than other methods, and can be generalized to different numbers of antennas and users.
\end{itemize}

\textit{Notations:} $a_{i,j}$ or $[\mathbf{A}]_{i,j}$ is the element in the $i$-th row and the $j$-th column of $\mathbf{A}$.
$\mathbf{I}$ denotes the identity matrix. $(\cdot)^\mathsf{T}$ and $(\cdot)^\mathsf{H}$ denote transpose and conjugate transpose of matrices, respectively.
$||\cdot||$ denotes Frobenius norm of matrices. $\odot$ and $\otimes$ denote Hadamard product and Kronecker product of matrices, respectively.
$\mathbb{U}(a,b)$ denotes uniform distribution with range between $a$ and $b$.
$\mathcal{CN}(\mu,\sigma)$ denotes complex Gaussian distribution with mean $\mu$ and deviation $\sigma$.
$P_{\mathtt{x}}(x)$ denotes the probability density function (PDF) of variable $\mathtt{x}$, where the typewriter font $\mathtt{x}$ distinguishes the variable from its specific value $x$.
$\mathbf{0}$ and $\mathbf{1}$ are matrices whose elements are all zeros and ones, respectively.
$\mathfrak{Re}(\cdot)$ and $\mathfrak{Im}(\cdot)$ denote the real and imaginary parts of a complex number, respectively.
$\mathbf{a} || \mathbf{b}$ denotes the concatenation of vectors $\mathbf{a}$ and $\mathbf{b}$.

\section{Problem Formulation}

Consider a MU-MISO system, where a BS equipped with $N$ antennas serves $K$ single-antenna users. 
Since there often exists randomness (say receiver noise) during  decision-making, we consider a stochastic optimization problem involving mathematical expectation \cite{kall1994stochastic}.

Denote $\mathbf{X} \in \mathbb{C}^{N \times K \times D_x}$ as the parameters measured for making a decision, and $\mathbf{Y} \in \mathbb{C}^{N \times K \times D_y}$ as the optimization variable, where $D_x$ and $D_y$ represent the dimensions of vectors corresponding to each antenna-user pair. 
The first and second dimensions of $\mathbf{X}$ and $\mathbf{Y}$ correspond to antenna and user dimensions, respectively.
The problem is formulated as
\refstepcounter{problem}\label{prob_stochastic}
\begin{IEEEeqnarray}{rcl}\IEEEyesnumber\label{eq_general_optimization}
\theproblem: \quad \max_{\mathbf{Y}} & \quad & U(\mathbf{X}, \mathbf{Y}) = \mathbb{E}_{\mathtt{Z} | \mathtt{X}}[U^\prime(\mathbf{Y}, \mathbf{Z}) | \mathbf{X}] \IEEEyessubnumber\label{eq_objective_general}\\
s.t. & \quad & \mathbf{c}(\mathbf{X}, \mathbf{Y}) {\leq} \mathbf{0},\IEEEyessubnumber
\end{IEEEeqnarray}
where $U^\prime(\cdot, \cdot)$ is a utility function, $U(\cdot, \cdot)$ is an \textit{average utility} over a random variable $\mathbf{Z} \in \mathbb{C}^{N \times K \times D_z}$ (e.g., the unknown channel state information) conditional on $\mathbf{X}$, the conditional PDF is $P_{\mathtt{Z} | \mathtt{X}} (\mathbf{Z} | \mathbf{X})$, and $\mathbf{c}(\cdot, \cdot)$ is a constraint function.

Denote the mapping from $\mathbf{X}$ to the optimal solution $\mathbf{Y}$ of problem \ref{prob_stochastic} as $\mathbf{Y} = f(\mathbf{X})$, which is referred to as a \textit{policy}.
In the following, we aim to design a DNN to solve problem \ref{prob_stochastic} by learning the policy.

\section{Distribution-dependent and Augmented Policies}\label{section_dd_policy}

In this section, we analyze how the permutation property of the policy $\mathbf{Y} = f(\mathbf{X})$ is determined by the underlying data distribution. 
We show that, although the optimization problem is defined over unordered sets, the policy does not necessarily inherit permutation equivariance when the corresponding PDF itself is not permutation invariant (PI). 
Then, we construct an augmented policy by introducing additional features that encode positional information, and prove that the resulting policy is always PE regardless of the underlying distribution.

\subsection{Definition of Permutation Invariance and Equivariance}

To facilitate further analysis, we define the permutation properties of general functions with inputs $\mathbf{T}_1, \cdots, \mathbf{T}_S$, where each input $\mathbf{T}_s \in \mathbb{C}^{N \times K \times D_s}$ is a three-dimensional tensor whose first and second dimensions correspond to the antenna and user dimensions, respectively.
Let $\boldsymbol{\Pi}_\mathrm{A} \in \mathbb{R}^{N \times N}$ and $\boldsymbol{\Pi}_\mathrm{U} \in \mathbb{R}^{K \times K}$ denote the permutation matrices corresponding to antennas and users, respectively, and the PI and PE properties are defined as follows.

\begin{definition}[\textbf{Permutation Invariance}]
A function $F(\mathbf{T}_1, \ldots, \mathbf{T}_S)$ is PI to antennas if
$
F(\mathbf{T}_1, \ldots, \mathbf{T}_S)
=
F(
\mathbf{T}_1\boldsymbol{\Pi}_\mathrm{U},
\ldots,
\mathbf{T}_S\boldsymbol{\Pi}_\mathrm{U}
)$ and PI to users if
$F(\mathbf{T}_1, \ldots, \mathbf{T}_S)
=
F(
\boldsymbol{\Pi}_\mathrm{A}^\mathsf{T}\mathbf{T}_1,
\ldots,
\boldsymbol{\Pi}_\mathrm{A}^\mathsf{T}\mathbf{T}_S
)$.\footnote{The matrix–tensor multiplication applies standard matrix multiplication to each slice along the third dimension.}
\end{definition}

\begin{definition}[\textbf{Permutation Equivariance}]
A tensor-valued function $\mathbf{F}(\mathbf{T}_1, \ldots, \mathbf{T}_S)
\in \mathbb{C}^{N \times K \times D_\mathrm{o}}$ is PE to antennas if
$
\boldsymbol{\Pi}_\mathrm{A}^\mathsf{T}
\mathbf{F}(\mathbf{T}_1, \ldots, \mathbf{T}_S)
=
\mathbf{F}(
\boldsymbol{\Pi}_\mathrm{A}^\mathsf{T}\mathbf{T}_1,
\ldots,
\boldsymbol{\Pi}_\mathrm{A}^\mathsf{T}\mathbf{T}_S
)
$
and PE to users if
$
\mathbf{F}(\mathbf{T}_1, \ldots, \mathbf{T}_S)
\boldsymbol{\Pi}_\mathrm{U}
=
\mathbf{F}(
\mathbf{T}_1\boldsymbol{\Pi}_\mathrm{U},
\ldots,
\mathbf{T}_S\boldsymbol{\Pi}_\mathrm{U}
).
$
\end{definition}


\subsection{Distribution-dependent Policy}

In this subsection, we will investigate whether the policy $\mathbf{Y} = f(\mathbf{X})$ inherits PE property from the unordered nature of the antenna and user sets. 
While such equivariance has been established in many deterministic wireless optimization problems, it does not hold for the stochastic problem \ref{prob_stochastic}. 
Specifically, we show that the permutation property of the policy depends on the permutation invariance of the underlying conditional distribution.

Problem \ref{prob_stochastic} involves two sets: the antenna set with $N$ BS antennas and the user set with $K$ users. 
Since these sets are inherently unordered, permuting their indices does not change the physical environment. 
For the policy $\mathbf{Y}=f(\mathbf{X})$, such permutations correspond to permuting the antenna and user dimensions of $\mathbf{X}$ and $\mathbf{Y}$, which suggests that the policy may be PE to antennas and users, i.e.,
\begin{equation}
\boldsymbol{\Pi}_\mathrm{A}^\mathsf{T} \mathbf{Y} \boldsymbol{\Pi}_\mathrm{U}
=
f\!\left(
\boldsymbol{\Pi}_\mathrm{A}^\mathsf{T}
\mathbf{X}
\boldsymbol{\Pi}_\mathrm{U}
\right).
\label{eq_policy_pe_expected}
\end{equation}

This intuition is valid for many deterministic wireless optimization problems, which can be regarded as a special case of \eqref{eq_general_optimization} where no latent random variable is involved. 
In such cases, the optimization problem does not involve any underlying statistical distribution, and the objective and constraints are fully determined by the observed input $\mathbf{X}$ and the decision variable $\mathbf{Y}$.

For example, consider the classical sum-rate maximization precoding problem under a transmit power constraint, i.e.,
\refstepcounter{problem}\label{prob_classic_precoding}
\begin{IEEEeqnarray}{rcl}\IEEEyesnumber\label{eq_e2e_optimization_re}
\theproblem: \quad \max_{\mathbf{V}} & \quad & R\left(\mathbf{H}_\mathrm{DL}, \mathbf{V}\right) \IEEEyessubnumber\\
s.t. & \quad & ||\mathbf{V}||_\mathrm{F}^2 - P_\mathrm{DL} \leq 0,\IEEEyessubnumber
\end{IEEEeqnarray}
where $\mathbf{H}_\mathrm{DL}, \mathbf{V} \in \mathbb{C}^{N \times K}$ denote the downlink channel matrix and precoding matrix, respectively, $R(\mathbf{H}_\mathrm{DL}, \mathbf{V})$ represents the achievable downlink sum rate, and $P_\mathrm{DL}$ is the total transmit power budget.\footnote{This problem will be revisited in a stochastic form in Section \ref{subsec_ce_e2e_problem} later, where the detailed system model and expression are specified.}

In this problem, both the sum-rate function and the power constraint are explicit functions of $\mathbf{H}_\mathrm{DL}$ and $\mathbf{V}$ and do not involve any underlying distribution. 
In this case, permuting the rows or columns of $\mathbf{H}_\mathrm{DL}$ is equivalent to reordering the system entities, leaving the objective and constraint invariant \cite{zhao2024understanding}. 
Consequently, the optimal precoding policy $\mathbf{V}=f(\mathbf{H}_\mathrm{DL})$ is PE.

However, this intuition does not hold for the policy $\mathbf{Y}=f(\mathbf{X})$ arising from the stochastic optimization problem \ref{prob_stochastic}. 
Specifically, relabeling antennas or users corresponds to permuting all context associated with these entities in the optimization problem. 
In deterministic problems like \textbf{P2}, the relevant context is fully captured by the measurements, so permuting $\mathbf{X}$ is equivalent to relabeling the entities. 
By contrast, in the stochastic problem \textbf{P1}, its optimal solution $\mathbf{Y}$ is determined not only by the observed measurements $\mathbf{X}$ but also by the conditional distribution $P_{\mathtt{Z}|\mathtt{X}}(\mathbf{Z}|\mathbf{X})$.
Therefore, unless the conditional distribution is PI, permuting $\mathbf{X}$ alone is not equivalent to relabeling the entities and the resulting policy is not PE.

The following proposition formalizes the condition under which the policy $\mathbf{Y} = f(\mathbf{X})$ exhibits PE property.

\begin{proposition}\label{proposition_pe}
$\mathbf{Y} = f(\mathbf{X})$ is PE to antennas or users if $U^\prime(\mathbf{Y}, \mathbf{Z})$, $\mathbf{c}(\mathbf{X}, \mathbf{Y})$ and $P_{\mathtt{Z}|\mathtt{X}}(\mathbf{Z}|\mathbf{X})$ in \ref{prob_stochastic} are PI to antennas or users.
\end{proposition}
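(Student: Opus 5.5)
We prove the statement for the antenna dimension; the user dimension is handled identically, with right-multiplication by $\boldsymbol{\Pi}_\mathrm{U}$ replacing left-multiplication by $\boldsymbol{\Pi}_\mathrm{A}^\mathsf{T}$. The idea is to show that the objective and the constraint of \ref{prob_stochastic} are jointly invariant when $\mathbf{X}$, $\mathbf{Y}$ and the integration variable $\mathbf{Z}$ are all permuted. This gives a bijection between the feasible sets of the original problem and the permuted problem that preserves the objective, so it maps optimal solutions to optimal solutions.

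Here "PI" is read in the joint sense appropriate to multi-input functions, consistent with Definition~1 applied to all arguments. Concretely, the hypotheses are
$U^\prime(\boldsymbol{\Pi}_\mathrm{A}^\mathsf{T}\mathbf{Y},\boldsymbol{\Pi}_\mathrm{A}^\mathsf{T}\mathbf{Z})=U^\prime(\mathbf{Y},\mathbf{Z})$,
$\mathbf{c}(\boldsymbol{\Pi}_\mathrm{A}^\mathsf{T}\mathbf{X},\boldsymbol{\Pi}_\mathrm{A}^\mathsf{T}\mathbf{Y})=\mathbf{c}(\mathbf{X},\mathbf{Y})$ and
$P_{\mathtt{Z}|\mathtt{X}}(\boldsymbol{\Pi}_\mathrm{A}^\mathsf{T}\mathbf{Z}|\boldsymbol{\Pi}_\mathrm{A}^\mathsf{T}\mathbf{X})=P_{\mathtt{Z}|\mathtt{X}}(\mathbf{Z}|\mathbf{X})$.

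\textbf{Step 1 (invariance of the average utility).} Write $U(\boldsymbol{\Pi}_\mathrm{A}^\mathsf{T}\mathbf{X},\boldsymbol{\Pi}_\mathrm{A}^\mathsf{T}\mathbf{Y})=\int U^\prime(\boldsymbol{\Pi}_\mathrm{A}^\mathsf{T}\mathbf{Y},\mathbf{Z})\,P_{\mathtt{Z}|\mathtt{X}}(\mathbf{Z}|\boldsymbol{\Pi}_\mathrm{A}^\mathsf{T}\mathbf{X})\,d\mathbf{Z}$. Substitute $\mathbf{Z}=\boldsymbol{\Pi}_\mathrm{A}^\mathsf{T}\mathbf{Z}^\prime$. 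A permutation of coordinates has unit Jacobian determinant in absolute value and maps the domain $\mathbb{C}^{N\times K\times D_z}$ onto itself. Applying the invariance of $U^\prime$ and of the conditional PDF then gives $U(\boldsymbol{\Pi}_\mathrm{A}^\mathsf{T}\mathbf{X},\boldsymbol{\Pi}_\mathrm{A}^\mathsf{T}\mathbf{Y})=U(\mathbf{X},\mathbf{Y})$. This change of variables is the only nontrivial step. The point to handle carefully is that the PDF's invariance must be joint in $(\mathbf{Z},\mathbf{X})$. Invariance in $\mathbf{X}$ alone would not suffice, and this is exactly where the policy's PE property depends on the distribution.

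\textbf{Step 2 (feasibility correspondence).} By the invariance of $\mathbf{c}$, $\mathbf{Y}$ is feasible for input $\mathbf{X}$ if and only if $\boldsymbol{\Pi}_\mathrm{A}^\mathsf{T}\mathbf{Y}$ is feasible for input $\boldsymbol{\Pi}_\mathrm{A}^\mathsf{T}\mathbf{X}$. The map $\mathbf{Y}\mapsto\boldsymbol{\Pi}_\mathrm{A}^\mathsf{T}\mathbf{Y}$ is a bijection, with inverse given by multiplication by $\boldsymbol{\Pi}_\mathrm{A}$.

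\textbf{Step 3 (optimality).} Let $\mathbf{Y}^\star=f(\mathbf{X})$. For any $\tilde{\mathbf{Y}}$ feasible for $\boldsymbol{\Pi}_\mathrm{A}^\mathsf{T}\mathbf{X}$, Steps 1 and 2 give
$U(\boldsymbol{\Pi}_\mathrm{A}^\mathsf{T}\mathbf{X},\tilde{\mathbf{Y}})=U(\mathbf{X},\boldsymbol{\Pi}_\mathrm{A}\tilde{\mathbf{Y}})\le U(\mathbf{X},\mathbf{Y}^\star)=U(\boldsymbol{\Pi}_\mathrm{A}^\mathsf{T}\mathbf{X},\boldsymbol{\Pi}_\mathrm{A}^\mathsf{T}\mathbf{Y}^\star)$.
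Hence $\boldsymbol{\Pi}_\mathrm{A}^\mathsf{T}\mathbf{Y}^\star$ is optimal for the permuted input, so $f(\boldsymbol{\Pi}_\mathrm{A}^\mathsf{T}\mathbf{X})=\boldsymbol{\Pi}_\mathrm{A}^\mathsf{T}f(\mathbf{X})$. This conclusion assumes the optimal solution is unique, or that $f$ selects optimizers consistently. Otherwise the statement holds at the level of optimal solution sets, and we will remark on this.
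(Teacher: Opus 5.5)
Your proposal is correct and follows essentially the same route as the paper: the core step is the identical change of variables $\mathbf{Z}=\boldsymbol{\Pi}_\mathrm{A}^\mathsf{T}\mathbf{Z}^\prime$, which shows that the average utility $U(\mathbf{X},\mathbf{Y})$ is jointly PI. Your Steps 2--3 spell out the standard fact, which the paper simply cites, that a PI objective and PI constraints yield a PE optimal policy; your uniqueness caveat makes explicit an assumption the paper leaves implicit in defining $f$.
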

\vspace{-4mm}
\begin{proof}
See Appendix \ref{app_proposition_pe}.
\end{proof}

Proposition \ref{proposition_pe} indicates that the PE property of the policy $\mathbf{Y} = f(\mathbf{X})$ depends on the PI property of $P_{\mathtt{Z}|\mathtt{X}}(\mathbf{Z} | \mathbf{X})$ due to the expectation in the objective function of \ref{prob_stochastic}.
We refer to such policies whose permutation property relies on the underlying distribution as \textit{distribution-dependent policies}.
In fact, the random variable may also appear in the constraints or in ways other than expectations (e.g., in probabilistic forms) \cite{kall1994stochastic}, while we do not elaborate on these scenarios for the sake of clarity.
By contrast, if neither the objectives nor the constraints of a wireless problem depend on unknown latent variables, such as the precoding problem \ref{prob_classic_precoding}, then the permutation property of the resulting policy is fixed, which is referred to as a \textit{distribution-independent} policy.

In what follows, we provide a toy example of parameter estimation to show how the underlying distribution affects the permutation property of the policy.

\begin{example}
Consider estimating $\mathbf{z} \in \mathbb{R}^{2 \times 1}$ from observation $\mathbf{x} \in \mathbb{R}^{2 \times 1}$, and the minimal MSE estimate is $\mathbf{y} = \mathbb{E}_{\mathtt{z} | \mathtt{x}}[\mathbf{z} | \mathbf{x}]$.
Denote the estimation policy as $\mathbf{y} = f(\mathbf{x})$.
Suppose $\mathbf{x}$ and $\mathbf{z}$ satisfy $\mathbf{x} = \mathbf{z} + \mathbf{n}$, where $\mathbf{n}$ represents noise independent of $\mathbf{z}$.
Denote $\mathbf{x} = [x_1, x_2]^\mathsf{T}$, $\mathbf{z} = [z_1, z_2]^\mathsf{T}$, $\mathbf{n} = [n_1, n_2]^\mathsf{T}$, and assume that 
$\mathbf{z} \sim \mathbb{N}(\mathbf{0}, \mathrm{diag}(\sigma_1^2, \sigma_2^2))$, $\mathbf{n} \sim \mathbb{N}(0, \sigma_n^2\mathbf{I})$.
Then, the resulting estimator is $\mathbf{y} = f(\mathbf{x}) = [\frac{\sigma_1^2}{\sigma_1^2 + \sigma_n^2}x_1, \frac{\sigma_2^2}{\sigma_2^2 + \sigma_n^2}x_2]^\mathsf{T}$.
If the observations are permuted to $\mathbf{x}^\prime = \boldsymbol{\Pi}\mathbf{x} = [x_2, x_1]^\mathsf{T}$, the corresponding optimal estimate will become $\mathbf{y}^\prime = f(\mathbf{x}^\prime) = [\frac{\sigma_1^2}{\sigma_1^2 + \sigma_n^2}x_2, \frac{\sigma_2^2}{\sigma_2^2 + \sigma_n^2}x_1]^\mathsf{T}$, which differs from $\boldsymbol{\Pi}\mathbf{y} = [\frac{\sigma_2^2}{\sigma_2^2 + \sigma_n^2}x_2, \frac{\sigma_1^2}{\sigma_1^2 + \sigma_n^2}x_1]^\mathsf{T}$ if $\sigma_1 \neq \sigma_2$. In other words, the policy exhibits PE property iff $\sigma_1 = \sigma_2$.
\end{example}

\subsection{Permutation Equivariant Policy Augmentation}\label{subsec_augmented_policy}

The policy $\mathbf{Y}=f(\mathbf{X})$ obtained from \ref{prob_stochastic} is not always PE, while incorporating permutation equivariance into the policy structure is highly desirable when learning the policy using DNNs, since it introduces an inductive biases that help reduce the hypothesis space and improve learning efficiency. 

To this end, we construct an augmented policy by incorporating additional features that encode the statistical distinction among entities, so that permutations can be explicitly represented in the input space. 

To help understand, we consider a single-user setting and focus only on the antenna dimension, while the extension to the user dimension is straightforward.
Without loss of generality, we set $D_x=D_y=1$ so that the tensors $\mathbf{X}, \mathbf{Y}, \mathbf{Z}$ reduce to vectors $\mathbf{x}, \mathbf{y}, \mathbf{z}\in\mathbb{C}^{N\times1}$.
Let $\mathtt{Z}=[\mathtt{z}_1,\ldots,\mathtt{z}_N]^\mathsf{T}$ denote the latent random variables and 
$\mathtt{X}=[\mathtt{x}_1,\ldots,\mathtt{x}_N]^\mathsf{T}$ denote the corresponding observations, with conditional PDF $P_{\mathtt{Z}|\mathtt{X}}(\mathbf{z}|\mathbf{x})$.

\subsubsection{Permutation Structure of Distribution Families}

Consider a permutation of indices represented by a permutation matrix $\boldsymbol{\Pi}$ and its associated permutation mapping $\pi(\cdot)$, and the permuted variables are $\mathtt{Z}^\pi =[\mathtt{z}_{\pi(1)},\ldots,\mathtt{z}_{\pi(N)}]^\mathsf{T}$ and $\mathtt{X}^\pi =[\mathtt{x}_{\pi(1)},\ldots,\mathtt{x}_{\pi(N)}]^\mathsf{T}$.
After permuting the indices of the antennas, the conditional distribution becomes $P_{\mathtt{Z}^\pi | \mathtt{X}^\pi}(\mathbf{z}|\mathbf{x})$.

Since these conditional PDFs correspond to the same underlying statistical model under different index orderings, they satisfy
\begin{equation}
P_{\mathtt{Z}|\mathtt{X}}(\mathbf{z}|\mathbf{x}) = P_{\mathtt{Z}^\pi | \mathtt{X}^\pi}(\boldsymbol{\Pi}^\mathsf{T} \mathbf{z}|\boldsymbol{\Pi}^\mathsf{T} \mathbf{x}),
\label{eq_pdf_permutation_relation}
\end{equation}
which indicates that the permutation property is reflected in the relationship among different conditional PDFs associated with different index orderings.

\subsubsection{Positional Embedding and Augmented Distribution}

To represent the permutation structure among different conditional PDFs using a single unified function, we introduce an additional vector $\mathbf{p}\in\mathbb{R}^{N\times1}$ whose elements are associated with the antennas. 
This vector aims to encode the antenna indices, and we refer to it as a \textit{positional embedding vector}.
By incorporating $\mathbf{p}$ into the policy input, we construct an augmented PDF $\tilde P(\mathbf{z}|\mathbf{x},\mathbf{p})$ defined as
\begin{equation}\label{eq_augmented_pdf_definition}
\tilde P(\mathbf{z}|\mathbf{x},\mathbf{p})=
\begin{cases}
P_{\mathtt{Z}|\mathtt{X}}(\mathbf{z}|\mathbf{x}), & \mathbf{p}=\tilde{\mathbf{p}},\\
P_{\mathtt{Z}^{\pi}|\mathtt{X}^{\pi}}(\mathbf{z}|\mathbf{x}), & \mathbf{p}=\boldsymbol{\Pi}^{\mathsf T}\tilde{\mathbf{p}},
\end{cases}
\end{equation}
where $\tilde{\mathbf{p}}\in\mathbb{R}^{N\times1}$ is a predetermined embedding vector associated with an initial index ordering. 
In $\tilde P(\mathbf{z}|\mathbf{x},\mathbf{p})$, the variable $\mathbf{p}$ takes values from the finite set $\{\boldsymbol{\Pi}^{\mathsf T}\tilde{\mathbf{p}}\ |\ \boldsymbol{\Pi}\ \text{is a permutation matrix}\}$, each element of which corresponds to a permutation of antenna indices and specifies the conditional distribution associated with that ordering.

The following proposition shows that such an augmented representation can always be constructed.

\begin{proposition}\label{prop_aug_pdf}
The augmented conditional distribution $\tilde P(\mathbf{z}|\mathbf{x},\mathbf{p})$ is well defined if for any two permutations $\boldsymbol{\Pi}_1$ and $\boldsymbol{\Pi}_2$, $\boldsymbol{\Pi}_1^\mathsf{T}\mathbf{p}=\boldsymbol{\Pi}_2^\mathsf{T}\mathbf{p}$ implies $P_{\mathtt{Z}^{\pi_1}|\mathtt{X}^{\pi_1}}(\mathbf{z}|\mathbf{x}) = P_{\mathtt{Z}^{\pi_2}|\mathtt{X}^{\pi_2}}(\mathbf{z}|\mathbf{x})$.
\end{proposition}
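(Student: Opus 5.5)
The plan is to treat the question as whether the assignment $\mathbf{p}\mapsto$ conditional PDF in \eqref{eq_augmented_pdf_definition} is a genuine function on the finite set $\mathcal{P}=\{\boldsymbol{\Pi}^{\mathsf T}\tilde{\mathbf{p}}\}$. By construction every element of $\mathcal{P}$ arises from at least one permutation matrix, so the definition covers all admissible $\mathbf{p}$ and existence is immediate. The only possible failure is ambiguity: one vector $\mathbf{p}\in\mathcal{P}$ may be reachable from two distinct permutations $\boldsymbol{\Pi}_1\neq\boldsymbol{\Pi}_2$. This happens precisely when $\tilde{\mathbf{p}}$ has repeated entries, so that $\boldsymbol{\Pi}_1^{\mathsf T}\tilde{\mathbf{p}}=\boldsymbol{\Pi}_2^{\mathsf T}\tilde{\mathbf{p}}$. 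In that case \eqref{eq_augmented_pdf_definition} would assign both $P_{\mathtt{Z}^{\pi_1}|\mathtt{X}^{\pi_1}}$ and $P_{\mathtt{Z}^{\pi_2}|\mathtt{X}^{\pi_2}}$ to the same $\mathbf{p}$.

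The key steps, in order, are as follows.

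\emph{(i)} Fix $\mathbf{p}\in\mathcal{P}$ and let $S(\mathbf{p})=\{\boldsymbol{\Pi}:\boldsymbol{\Pi}^{\mathsf T}\tilde{\mathbf{p}}=\mathbf{p}\}$. This set is nonempty by definition of $\mathcal{P}$.

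\emph{(ii)} For any $\boldsymbol{\Pi}_1,\boldsymbol{\Pi}_2\in S(\mathbf{p})$ we have $\boldsymbol{\Pi}_1^{\mathsf T}\tilde{\mathbf{p}}=\boldsymbol{\Pi}_2^{\mathsf T}\tilde{\mathbf{p}}$. The hypothesis of the proposition is stated for the generic embedding vector, which I read as $\tilde{\mathbf{p}}$. Applying it gives $P_{\mathtt{Z}^{\pi_1}|\mathtt{X}^{\pi_1}}(\mathbf{z}|\mathbf{x})=P_{\mathtt{Z}^{\pi_2}|\mathtt{X}^{\pi_2}}(\mathbf{z}|\mathbf{x})$ for all $\mathbf{z},\mathbf{x}$.

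\emph{(iii)} Hence all candidate values for $\tilde P(\cdot|\cdot,\mathbf{p})$ coincide, and $\tilde P(\mathbf{z}|\mathbf{x},\mathbf{p})$ can be set to this common value using any representative of $S(\mathbf{p})$. The identity permutation lies in $S(\tilde{\mathbf{p}})$, so the first branch of \eqref{eq_augmented_pdf_definition} is consistent with the second.

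\emph{(iv)} Each value is itself a valid conditional PDF, so $\tilde P$ is a well-defined conditional distribution indexed by $\mathbf{p}$.

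As a remark, I will note that the hypothesis holds trivially when $\tilde{\mathbf{p}}$ has distinct entries, for instance $\tilde p_n=n$. Then $\boldsymbol{\Pi}_1^{\mathsf T}\tilde{\mathbf{p}}=\boldsymbol{\Pi}_2^{\mathsf T}\tilde{\mathbf{p}}$ forces $\boldsymbol{\Pi}_1=\boldsymbol{\Pi}_2$, so such an augmented representation can always be constructed. I will also use \eqref{eq_pdf_permutation_relation} to phrase the condition equivalently: positions sharing an embedding value must be statistically exchangeable.

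The main obstacle is the quantifier in the hypothesis. It refers to $\mathbf{p}$, while the ambiguity concerns $\tilde{\mathbf{p}}$. I will handle this by reducing any $\mathbf{p}=\boldsymbol{\Pi}_0^{\mathsf T}\tilde{\mathbf{p}}$ to $\tilde{\mathbf{p}}$, using $(\boldsymbol{\Pi}_0\boldsymbol{\Pi})^{\mathsf T}=\boldsymbol{\Pi}^{\mathsf T}\boldsymbol{\Pi}_0^{\mathsf T}$ and the fact that composing permutations corresponds to composing the maps $\pi$. This shows that checking the condition at $\tilde{\mathbf{p}}$ suffices for every element of $\mathcal{P}$.
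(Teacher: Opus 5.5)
Your steps (i)--(iv) are correct. They are exactly the argument the paper leaves implicit: the paper gives no separate proof, treats well-definedness as immediate from the piecewise definition \eqref{eq_augmented_pdf_definition}, and adds only your closing remark that distinct entries such as $\tilde{\mathbf{p}}=[1,\ldots,N]^\mathsf{T}$ make the condition vacuous. Your exchangeability reformulation via \eqref{eq_pdf_permutation_relation} is also right: with $\mathbf{p}$ read as $\tilde{\mathbf{p}}$, the hypothesis says exactly that $P_{\mathtt{Z}|\mathtt{X}}$ is invariant under every joint permutation $\boldsymbol{\Sigma}$ of $(\mathbf{z},\mathbf{x})$ with $\boldsymbol{\Sigma}\tilde{\mathbf{p}}=\tilde{\mathbf{p}}$.

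You should, however, delete the ``main obstacle'' paragraph. It is unnecessary: any ambiguity at any $\mathbf{p}\in\mathcal{P}$ already has the form $\boldsymbol{\Pi}_1^\mathsf{T}\tilde{\mathbf{p}}=\boldsymbol{\Pi}_2^\mathsf{T}\tilde{\mathbf{p}}=\mathbf{p}$, and step (ii) settles this directly. Reading the hypothesis at $\tilde{\mathbf{p}}$ is the only reading that matches the definition, so the quantifier issue is resolved by interpretation, not by a reduction.

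Moreover, if the paragraph is meant to show that the hypothesis at $\tilde{\mathbf{p}}$ implies the same hypothesis at another $\mathbf{p}=\boldsymbol{\Pi}_0^\mathsf{T}\tilde{\mathbf{p}}$, that claim is false. The stabilizer of $\mathbf{p}$ is the conjugate $\boldsymbol{\Pi}_0^\mathsf{T}\{\boldsymbol{\Sigma}:\boldsymbol{\Sigma}\tilde{\mathbf{p}}=\tilde{\mathbf{p}}\}\boldsymbol{\Pi}_0$, and invariance of $P_{\mathtt{Z}|\mathtt{X}}$ does not pass to conjugates. For instance, take $N=3$, $\tilde{\mathbf{p}}=[1,1,2]^\mathsf{T}$, and $\mathbf{z}$ Gaussian with covariance $\mathrm{diag}(\sigma^2,\sigma^2,\sigma_3^2)$, $\sigma_3\neq\sigma$, observed in white noise as in the paper's Example. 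The hypothesis holds at $\tilde{\mathbf{p}}$ but fails at $\mathbf{p}=[2,1,1]^\mathsf{T}$, even though $\tilde P$ is well defined. Your composition identity only equates the PDFs attached to $\boldsymbol{\Pi}_0\boldsymbol{\Pi}_1$ and $\boldsymbol{\Pi}_0\boldsymbol{\Pi}_2$, not those attached to $\boldsymbol{\Pi}_1$ and $\boldsymbol{\Pi}_2$.
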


According to Proposition \ref{prop_aug_pdf}, the embedding vector is not unique.
A simple sufficient construction is to assign distinct values to its elements, e.g., $\tilde{\mathbf{p}}=[1,\ldots,N]^\mathsf{T}$.

Combining \eqref{eq_pdf_permutation_relation} and \eqref{eq_augmented_pdf_definition}, we obtain
\begin{equation}
\tilde P(\boldsymbol{\Pi}^\mathsf{T}\mathbf{z}\mid \boldsymbol{\Pi}^\mathsf{T}\mathbf{x},\boldsymbol{\Pi}^\mathsf{T}\mathbf{p})
=
\tilde P(\mathbf{z}|\mathbf{x},\mathbf{p}),
\end{equation}
which shows that the augmented conditional distribution $\tilde P(\mathbf{z}|\mathbf{x},\mathbf{p})$ is PI, regardless of the conditional PDF $P_{\mathtt{Z}|\mathtt{X}}(\mathbf{z}|\mathbf{x})$.

\subsubsection{Permutation Property of the Augmented Policy}

Based on the positional embedding, we construct an augmented policy by concatenating the embedding with the original input vector. 
Specifically, define the augmented input as $(\mathbf{x},\mathbf{p})$, and the augmented policy $\mathbf{y}=\tilde f(\mathbf{x},\mathbf{p})$ is defined as the mapping from $(\mathbf{x},\mathbf{p})$ to the optimal solution of the following problem,
\refstepcounter{problem}\label{prob_stochastic_augmented}
\begin{IEEEeqnarray}{rcl}\IEEEyesnumber\label{eq_general_optimization_augmented}
\theproblem: \quad \max_{\mathbf{y}} & \quad & U(\mathbf{x}, \mathbf{y}, \mathbf{p}) = \mathbb{E}\!\left[U^\prime(\mathbf{y}, \mathbf{z}) \mid \mathbf{x}, \mathbf{p}\right] \IEEEyessubnumber\label{eq_objective_general_augmented}\\
s.t. & \quad & \mathbf{c}(\mathbf{x}, \mathbf{y}) {\leq} \mathbf{0},\IEEEyessubnumber
\end{IEEEeqnarray}
where the expectation is taken with respect to the augmented conditional distribution $\tilde P(\mathbf{z}|\mathbf{x},\mathbf{p})$.

When $\mathbf{p}=\tilde{\mathbf{p}}$, \textbf{P3} is equivalent to \textbf{P1} in the sense that they share the same optimal solution. 
Moreover, the following proposition shows that introducing the positional embedding converts the original distribution-dependent policy into a distribution-independent policy.

\begin{proposition}\label{prop_aug_policy_pe}
$\mathbf{y} = \tilde{f}(\mathbf{x}, \mathbf{p})$ is PE to antennas or users if $U^\prime(\mathbf{y}, \mathbf{z})$ and $\mathbf{c}(\mathbf{x}, \mathbf{y})$ in \ref{prob_stochastic} are PI to antennas or users.
\end{proposition}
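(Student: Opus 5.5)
The plan is to reduce Proposition~\ref{prop_aug_policy_pe} to Proposition~\ref{proposition_pe}. We regard \textbf{P3} as an instance of the stochastic problem \ref{prob_stochastic} whose observed input is the augmented pair $(\mathbf{x},\mathbf{p})$ rather than $\mathbf{x}$ alone. The constraint $\mathbf{c}(\mathbf{x},\mathbf{y})$ simply ignores $\mathbf{p}$. Because each element of $\mathbf{p}$ is attached to one antenna, a permutation of the antennas must act on $\mathbf{p}$ exactly as it acts on $\mathbf{x}$. With this bookkeeping, the three hypotheses of Proposition~\ref{proposition_pe} for the augmented problem are: $U^\prime$ is PI, $\mathbf{c}$ is PI, and $\tilde P(\mathbf{z}|\mathbf{x},\mathbf{p})$ is PI. The first two are assumed. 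The third was already established just above: combining \eqref{eq_pdf_permutation_relation} with \eqref{eq_augmented_pdf_definition} gives $\tilde P(\boldsymbol{\Pi}^\mathsf{T}\mathbf{z}|\boldsymbol{\Pi}^\mathsf{T}\mathbf{x},\boldsymbol{\Pi}^\mathsf{T}\mathbf{p})=\tilde P(\mathbf{z}|\mathbf{x},\mathbf{p})$, and Proposition~\ref{prop_aug_pdf} guarantees that $\tilde P$ is well defined.

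To avoid relying on an unspecified proof of Proposition~\ref{proposition_pe}, I will also verify the claim directly. Fix $(\mathbf{x},\mathbf{p})$ and a permutation $\boldsymbol{\Pi}$, and write $U(\mathbf{x},\mathbf{y},\mathbf{p})=\int U^\prime(\mathbf{y},\mathbf{z})\,\tilde P(\mathbf{z}|\mathbf{x},\mathbf{p})\,d\mathbf{z}$. Then
\begin{equation*}
U(\boldsymbol{\Pi}^\mathsf{T}\mathbf{x},\boldsymbol{\Pi}^\mathsf{T}\mathbf{y},\boldsymbol{\Pi}^\mathsf{T}\mathbf{p})=\int U^\prime(\boldsymbol{\Pi}^\mathsf{T}\mathbf{y},\mathbf{z})\,\tilde P(\mathbf{z}|\boldsymbol{\Pi}^\mathsf{T}\mathbf{x},\boldsymbol{\Pi}^\mathsf{T}\mathbf{p})\,d\mathbf{z}.
\end{equation*}
Next substitute $\mathbf{z}=\boldsymbol{\Pi}^\mathsf{T}\mathbf{z}'$; the Jacobian of a permutation has absolute value one. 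Using PI of $U^\prime$ in the form $U^\prime(\boldsymbol{\Pi}^\mathsf{T}\mathbf{y},\boldsymbol{\Pi}^\mathsf{T}\mathbf{z}')=U^\prime(\mathbf{y},\mathbf{z}')$, together with PI of $\tilde P$, shows that this equals $U(\mathbf{x},\mathbf{y},\mathbf{p})$. Similarly, PI of $\mathbf{c}$ gives $\mathbf{c}(\boldsymbol{\Pi}^\mathsf{T}\mathbf{x},\boldsymbol{\Pi}^\mathsf{T}\mathbf{y})\le\mathbf{0}$ if and only if $\mathbf{c}(\mathbf{x},\mathbf{y})\le\mathbf{0}$.

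It follows that $\mathbf{y}\mapsto\boldsymbol{\Pi}^\mathsf{T}\mathbf{y}$ is a bijection between the feasible sets of \textbf{P3} at inputs $(\mathbf{x},\mathbf{p})$ and $(\boldsymbol{\Pi}^\mathsf{T}\mathbf{x},\boldsymbol{\Pi}^\mathsf{T}\mathbf{p})$, and it preserves the objective. Hence $\boldsymbol{\Pi}^\mathsf{T}\tilde f(\mathbf{x},\mathbf{p})$ is optimal for the permuted input, i.e.\ $\boldsymbol{\Pi}^\mathsf{T}\tilde f(\mathbf{x},\mathbf{p})=\tilde f(\boldsymbol{\Pi}^\mathsf{T}\mathbf{x},\boldsymbol{\Pi}^\mathsf{T}\mathbf{p})$. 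This assumes the optimizer is unique; otherwise the statement is read as equivariance of the set of optimal solutions. The user dimension is handled identically: use a user-indexed embedding and right-multiplication by $\boldsymbol{\Pi}_\mathrm{U}$, and for the joint case apply both steps in sequence.

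The main obstacle is the PI property of $\tilde P$ on its whole domain. Relation~\eqref{eq_augmented_pdf_definition} only defines $\tilde P$ on the orbit $\{\boldsymbol{\Pi}^\mathsf{T}\tilde{\mathbf{p}}\}$. We must therefore check that composing permutations is consistent, i.e.\ that $P_{\mathtt{Z}^{\pi}|\mathtt{X}^{\pi}}$ transforms as a group action, $(\pi_1\circ\pi_2)$-compatible. We must also check that ties in $\tilde{\mathbf{p}}$ are harmless, which is exactly the content of Proposition~\ref{prop_aug_pdf}. I would handle this first by taking $\tilde{\mathbf{p}}$ with distinct entries, so each $\mathbf{p}$ in the orbit determines a unique permutation. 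Then I would verify the invariance identity directly from \eqref{eq_pdf_permutation_relation} applied to the composed permutation.
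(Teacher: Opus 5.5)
Your proposal is correct and follows the paper's proof, which is the one-line reduction to Proposition~\ref{proposition_pe} using the permutation invariance of $\tilde P(\mathbf{z}|\mathbf{x},\mathbf{p})$; your direct change-of-variables check is the argument of Appendix~\ref{app_proposition_pe} rerun with $(\mathbf{x},\mathbf{p})$ as the observed input. Your extra check that the invariance holds for every $\mathbf{p}$ in the orbit of $\tilde{\mathbf{p}}$, and not only at $\mathbf{p}=\tilde{\mathbf{p}}$ (which is all that combining \eqref{eq_pdf_permutation_relation} and \eqref{eq_augmented_pdf_definition} literally gives), fills in a detail the paper leaves implicit.
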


\begin{proof}
The result follows directly from Proposition \ref{proposition_pe} since $\tilde P(\mathbf{z}|\mathbf{x},\mathbf{p})$ is permutation invariant.
\end{proof}

\begin{remark}
For the multi-user case, an additional positional embedding can be introduced to encode the user indices.
\end{remark}

\section{Positional Attention-based GNN}\label{section_covgnn}

In this section, we propose a positional attention mechanism and integrate it with GNN to learn the augmented policy.
We begin by modelling the MU-MISO system as a heterogeneous graph and presenting an edge-GNN for learning policies on it.
Then, we analyze the permutation property of the edge-GNN, and show that standard attention mechanisms preserve permutation equivariance.
Finally, we incorporate the positional embedding vector into the attention coefficients, proposing a positional attention-based GNN whose permutation property is consistent with the augmented policy.

\subsection{Edge-GNN for Learning MU-MISO Policies}

We first introduce the graph representation and the basic edge-GNN architecture used to learn wireless policies.
GNNs operate on graphs $\mathcal{G}=(\mathcal{V},\mathcal{E})$, where $\mathcal{V}$ and $\mathcal{E}$ denote the sets of vertices and edges, respectively.
By enforcing parameter sharing across vertices and edges of the same type, the number of trainable parameters of GNNs is independent of the graph size, which is a premise for generalizing the learned policy to different problem sizes.

To support size generalization with respect to both the number of BS antennas and the number of users in the wireless policy $\mathbf{Y}=f(\mathbf{X})$, we construct a heterogeneous graph consisting of $N$ antenna vertices $\mathsf{A}_1,\ldots,\mathsf{A}_N$ and $K$ user vertices $\mathsf{U}_1,\ldots,\mathsf{U}_K$.
Since both the input features $\mathbf{X}\in\mathbb{C}^{N\times K\times D_x}$ and the output actions $\mathbf{Y}\in\mathbb{C}^{N\times K\times D_y}$ are naturally indexed by antenna-user pairs, it is straightforward to assign features and actions to the edges connecting antenna and user vertices.
Specifically, we denote the edge between $\mathsf{A}_n$ and $\mathsf{U}_k$ as $(n, k)$, and associate it with feature vector $\mathbf{X}[n, k, :]$ and action vector $\mathbf{Y}[n, k, :]$, respectively.
Consequently, the graph is a complete bipartite graph as shown in Fig.~\ref{fig_topology}.

\begin{figure}[htbp]
\centering
\includegraphics[width=0.7\linewidth]{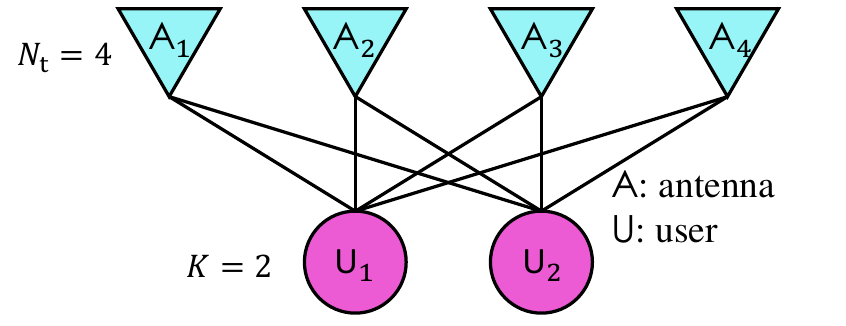}\vspace{-3mm}
\caption{Heterogeneous graph for learning policies in MU-MISO systems.}
\label{fig_topology}
\end{figure}

Denote the hidden representation of edge $(n, k)$ in the $l$-th layer as $\mathbf{d}^{(l)}_{n, k} \in \mathbb{R}^{C_l \times 1}$, where $C_l$ is the dimension of the hidden representation and is a hyper-parameter.
The hidden representations of all edges can be expressed as a tensor $\mathbf{D}^{(l)} \in \mathbb{R}^{N \times K \times C_l}$ with $\mathbf{D}^{(l)}[n, k, :] = \mathbf{d}^{(l)}_{n, k}$.
Denote $N_\mathrm{L}$ as the number of layers of the GNN.
The input and output of the edge-GNN are $\mathbf{d}_{n, k}^{(0)} = [\mathfrak{Re}(\mathbf{X}[n, k, :])^\mathsf{T}, \mathfrak{Im}(\mathbf{X}[n, k, :])^\mathsf{T}]^\mathsf{T} \in \mathbb{R}^{2D_x \times 1}$ and $\mathbf{d}_{n, k}^{(N_\mathrm{L})} = [\mathfrak{Re}(\mathbf{Y}[n, k, :])^\mathsf{T}, \mathfrak{Im}(\mathbf{Y}[n, k, :])^\mathsf{T}]^\mathsf{T} \in \mathbb{R}^{2D_y \times 1}$, which consist of the real and imaginary parts of the features and actions, respectively.

To improve expressive power, we incorporate an attention mechanism into the aggregation process and adopt vector-valued attention coefficients as in \cite{liu2024multidimensional}.
The update equation of the edge-GNN is given by
\begin{IEEEeqnarray*}{rcl}\label{eq_gat_update}
\mathbf{d}_{n,k}^{(l)}
&=&
f_\mathsf{C}^{(l)}\Bigg(
\mathbf{d}_{n,k}^{(l-1)},
\underbrace{\sum_{n^\prime=1}^{N}
\boldsymbol{\rho}_{n,n^\prime,k}^{(l)}
\odot
\mathring{\mathbf{d}}_{n^\prime,k}^{(l-1)}}_{\text{aggregate along the antenna dimension}},
\\
&&\qquad
\underbrace{\sum_{k^\prime=1}^{K}
\boldsymbol{\omega}_{k,k^\prime,n}^{(l)}
\odot
\mathring{\mathbf{d}}_{n,k^\prime}^{(l-1)}}_{\text{aggregate along the user dimension}}
\Bigg),\IEEEyesnumber
\end{IEEEeqnarray*}
where $\mathring{\mathbf{d}}_{n,k}^{(l-1)}=f_\mathsf{P}^{(l)}(\mathbf{d}_{n,k}^{(l-1)})$ denotes the processed hidden representation, and $f_\mathsf{C}^{(l)}(\cdot)$ and $f_\mathsf{P}^{(l)}(\cdot)$ are respectively the combination and processing functions implemented by FNNs.
The vectors $\boldsymbol{\rho}_{n,n^\prime,k}^{(l)}$ and $\boldsymbol{\omega}_{k,k^\prime,n}^{(l)}$ denote attention coefficients along the antenna and user dimensions.

As shown in \eqref{eq_gat_update}, the same functions $f_\mathsf{C}^{(l)}(\cdot)$ and $f_\mathsf{P}^{(l)}(\cdot)$ are used to update the hidden representations for all edges.
Therefore, the number of free parameters of the edge-GNN is independent of $N$ and $K$, which is necessary for generalization to other problem sizes.

\subsection{Positional Attention Mechanism}

To learn the augmented policy constructed in the previous section, we first analyze the PE property of the edge-GNN, and then propose a positional attention mechanism to incorporate the positional embedding.

As in Section \ref{subsec_augmented_policy}, we still focus on the antenna dimension by considering a single-user scenario and omitting the subscript $k$ to highlight the key idea of our design.
Moreover, for notational simplicity, we set $C_l=1$ so that the hidden representations reduce to scalars.
Under these simplifications, the update equation in \eqref{eq_gat_update} becomes
\begin{equation}
d_{n}^{(l)} = f_\mathsf{C}^{(l)} \Big(d_{n}^{(l-1)}, \sum_{n^\prime=1}^{N}\rho_{n,n^\prime}^{(l)} \mathring{d}_{n^\prime}^{(l-1)}\Big).
\label{eq_update_pagnn_simplified_new}
\end{equation}

\subsubsection{Permutation Property of Attention-based GNN}

Since the choices of $f_\mathsf{C}^{(l)}(\cdot)$ and $f_\mathsf{P}^{(l)}(\cdot)$ do not affect the permutation property of the GNN layer \cite{gilmer2017neural}, we adopt $f_\mathsf{C}^{(l)}(x)=\mathrm{sum}(\cdot)$ and $f_\mathsf{P}^{(l)}(x)=x$ to further simplify the update equation in \eqref{eq_update_pagnn_simplified_new} and highlight the role of the attention coefficients, yielding
\begin{equation}
d_n^{(l)} = d_n^{(l - 1)} + \sum_{n^\prime = 1}^{N} \rho_{n, n^\prime}^{(l)} d_{n^\prime}^{(l - 1)}.
\end{equation}

By stacking $d_n^{(l)}, n=1,\ldots,N$ into a vector $\mathbf{d}^{(l)} \triangleq [d_1^{(l)}, \ldots, d_{N}^{(l)}]^\mathsf{T}$, the update equation can be written in matrix form as
\begin{equation}\label{eq_update_mat}
\mathbf{d}^{(l)} = f^{(l)}(\mathbf{d}^{(l - 1)}, \mathbf{A}^{(l)}) \triangleq (\mathbf{A}^{(l)} + \mathbf{I}_{N}) \mathbf{d}^{(l - 1)},
\end{equation}
where $\mathbf{A}^{(l)} \in \mathbb{R}^{N \times N}$ is the \textit{attention matrix} with entries $[\mathbf{A}^{(l)}]_{n, n^\prime} \triangleq \rho_{n, n^\prime}^{(l)}$.
It can be easily verified that the update equation satisfies
\begin{equation}\label{eq_att_permutation}
\boldsymbol{\Pi}^\mathsf{T}\mathbf{d}^{(l)} = f^{(l)}(\boldsymbol{\Pi}^\mathsf{T}\mathbf{d}^{(l - 1)}, \boldsymbol{\Pi}^\mathsf{T}\mathbf{A}^{(l)}\boldsymbol{\Pi}).
\end{equation}

For standard attention mechanisms, the attention coefficients are determined by the hidden representations from the previous layer $\mathbf{d}^{(l - 1)}$ or the input feature $\mathbf{d}^{(0)}$. 
As an example, the dot-product attention mechanism in \cite{vaswani2017attention} computes the coefficients as $\rho_{n, n^\prime}^{(l)} = \rho^{(l)}(d_{n}^{(l - 1)}, d_{n^\prime}^{(l - 1)}) \triangleq w_\mathsf{Q}^{(l)}d_{n}^{(l - 1)} \cdot w_\mathsf{K}^{(l)}d_{n^\prime}^{(l - 1)} / \sqrt{C_l}$ with trainable parameters $w_\mathsf{Q}^{(l)}$ and $w_\mathsf{K}^{(l)}$.
The resulting attention matrix can be expressed as
\begin{equation}\label{eq_att_mat_existing}
\begin{small}
\mathbf{A}^{(l)}(\mathbf{d}^{(l - 1)})\triangleq
\begin{bmatrix}
\rho^{(l)}(d^{(l - 1)}_1, d^{(l - 1)}_1) & \cdots & \rho^{(l)}(d^{(l - 1)}_1, d^{(l - 1)}_{N}) \\
\vdots & \ddots & \vdots \\
\rho^{(l)}(d^{(l - 1)}_{N}, d^{(l - 1)}_1) & \cdots & \rho^{(l)}(d^{(l - 1)}_{N}, d^{(l - 1)}_{N})
\end{bmatrix}.
\end{small}
\end{equation}

From \eqref{eq_att_mat_existing}, permuting the hidden representations to $\boldsymbol{\Pi}^\mathsf{T} \mathbf{d}^{(l - 1)}$ leads to a corresponding permutation of the attention matrix, i.e., $\mathbf{A}^{(l)}(\boldsymbol{\Pi}^\mathsf{T} \mathbf{d}^{(l - 1)}) = \boldsymbol{\Pi}^\mathsf{T} \mathbf{A}^{(l)}(\mathbf{d}^{(l - 1)}) \boldsymbol{\Pi}$, and the updated output becomes
\begin{IEEEeqnarray*}{rcl}
\mathbf{d}^{\prime (l)} 
&=& f^{(l)}(\boldsymbol{\Pi}^\mathsf{T} \mathbf{d}^{(l - 1)}, \mathbf{A}^{(l)}(\boldsymbol{\Pi}^\mathsf{T} \mathbf{d}^{(l - 1)})) \\
&=& f^{(l)}(\boldsymbol{\Pi}^\mathsf{T} \mathbf{d}^{(l - 1)}, \boldsymbol{\Pi}^\mathsf{T} \mathbf{A}^{(l)}(\mathbf{d}^{(l - 1)}) \boldsymbol{\Pi}) = \boldsymbol{\Pi}^\mathsf{T} \mathbf{d}^{(l)},
\end{IEEEeqnarray*}
Therefore, the individual GNN layer and the complete mapping learned by stacking $N_\mathrm{L}$ such layers are PE to antennas.

\subsubsection{Injecting Positional Embeddings}

Since the distribution-dependent policy may not exhibit PE property, employing the edge-GNN to learn it directly may introduce a mismatched inductive bias.
To address this issue, we incorporate the positional embedding into the edge-GNN and employ it to learn the augmented policy $\mathbf{y}=\tilde f(\mathbf{x},\mathbf{p})$ from \textbf{P3}.

A straightforward approach is to concatenate the positional embedding vector $\tilde{\mathbf{p}}$ with the input features, i.e., expanding $\mathbf{d}_{n}^{(0)}=[\mathfrak{Re}(x_n)^\mathsf{T},\mathfrak{Im}(x_n)^\mathsf{T}]^\mathsf{T}$ to $\mathbf{d}_{n}^{(0)}=[\mathfrak{Re}(x_n)^\mathsf{T},\mathfrak{Im}(x_n)^\mathsf{T},\tilde{p}_n]^\mathsf{T}$.
This is similar to the positional encoding used in Transformers \cite{vaswani2017attention}. 
However, this increases the feature dimensionality and makes the positional information difficult for subsequent layers to capture.

Instead, we incorporate the positional embeddings into the attention mechanism so that positional information affects the aggregation at every layer. 
Since the augmented policy satisfies $\boldsymbol{\Pi}^\mathsf{T}\mathbf{y}=\tilde f(\boldsymbol{\Pi}^\mathsf{T}\mathbf{x},\boldsymbol{\Pi}^\mathsf{T}\mathbf{p})$, each GNN layer should preserve the same permutation property. 
Specifically, with the attention matrix $\mathbf{A}^{(l)}(\tilde{\mathbf{p}})$, the $l$-th layer satisfies
\begin{equation}
\boldsymbol{\Pi}^\mathsf{T}\mathbf{d}^{(l)} = f^{(l)}(\boldsymbol{\Pi}^\mathsf{T}\mathbf{d}^{(l-1)},\mathbf{A}^{(l)}(\boldsymbol{\Pi}^\mathsf{T}\tilde{\mathbf{p}})).
\end{equation}

Comparing it with \eqref{eq_att_permutation}, the attention matrix induced by the positional embedding should satisfy
\[
\mathbf{A}^{(l)}(\boldsymbol{\Pi}^\mathsf{T}\tilde{\mathbf{p}})
=
\boldsymbol{\Pi}^\mathsf{T}\mathbf{A}^{(l)}(\tilde{\mathbf{p}})\boldsymbol{\Pi}.
\]
This requirement can be satisfied by constructing the attention coefficients based on pairwise positional embeddings. Specifically, we define
\begin{equation}\label{eq_att_coef}
\rho_{n,n'}^{(l)} = f_\mathsf{A}^{(l)}\!\left(\mathrm{EB}(\tilde p_n,\tilde p_{n'})\right),
\end{equation}
where $\mathrm{EB}(\cdot,\cdot)$ is a deterministic pairwise embedding function that models the interaction between the positional embeddings of two antennas.

In practice, the mapping from the antenna index $n$ to its corresponding embedding value $\tilde{p}_n$ is fixed and can be integrated into the embedding function $\mathrm{EB}(\cdot, \cdot)$. 
By setting $\tilde p_n = n$ without loss of generality, \eqref{eq_att_coef} can be simplified as
\[
\rho_{n,n'}^{(l)} = f_\mathsf{A}^{(l)}\!\left(\mathrm{EB}(n,n')\right).
\]


To illustrate the role of $\mathrm{EB}(\cdot, \cdot)$, consider a typical scenario where antennas are arranged in structured arrays such as ULAs or UPAs.
In such systems, the channel statistics associated with different antennas are determined by their spatial locations. 
Accordingly, the function $\mathrm{EB}(n, n^\prime)$ can be designed to reflect the relative spatial relationship of the $n$-th and the $n^\prime$-th antennas, which facilitates the GNN to capture the underlying channel correlation structure.

\subsubsection{Update Equation}

The positional attention mechanism was derived under the simplified single-user setting in \eqref{eq_update_pagnn_simplified_new}. 
We now return to the general MU-MISO setting and provide the update equation of the positional attention-based GNN (referred to as \textit{PA-GNN}), which is
\begin{IEEEeqnarray*}{rcl}
\mathbf{d}_{n, k}^{(l)} & = & f_\mathsf{C}^{(l)}\Big({\mathbf{d}}_{n, k}^{(l - 1)}, \sum_{n^\prime = 1}^{N}f_\mathsf{A}^{(l)}(\mathrm{EB}(n, n^\prime)) \odot \mathring{\mathbf{d}}_{n^\prime, k}^{(l - 1)}, \\
& & \quad \sum_{k^\prime = 1}^{K}\boldsymbol{\omega}_{k, k^\prime, n}^{(l)} \odot \mathring{\mathbf{d}}_{n, k^\prime}^{(l-1)}\Big), \IEEEyesnumber\label{eq_update_pgnn}
\end{IEEEeqnarray*}
where the hidden representations are restored to vectors with dimension $C_l>1$, and the output of $f_\mathsf{A}^{(l)}(\cdot)$ has the same dimension.
Here, the positional attention mechanism is applied only along the antenna dimension, since the two problems considered in the next section are PE with respect to users.
Nonetheless, it can be extended to the user dimension if needed.

The number of trainable parameters in PA-GNN is still independent of $N$. 
This is because $f_\mathsf{A}^{(l)}(\cdot)$ is shared and produces all attention coefficients in $\mathbf{A}^{(l)}$, whose input and output dimensions are fixed and do not scale with $N$.

\begin{remark}
The proposed positional attention mechanism is not limited to the bipartite graph illustrated in Fig.~\ref{fig_topology}, and can be extended to other graph structures.
For example, when applying GNNs to learn policies for MU-MIMO systems over more complex graph topologies \cite{liu2024multidimensional}, the relative positions of user antennas can also be leveraged.
\end{remark}

\section{Learning Channel Estimation and E2E Precoding}\label{section_ce_and_e2e}

In this section, we focus on two practical distribution-dependent policies, i.e., the channel estimation policy and E2E precoding policy, and employ the proposed PA-GNN to learn them.
We first prove that both policies are PE to users, but not to antennas when the channels are spatially correlated.
Then, we design the embedding function of the PA-GNN for learning the two policies, by investigating the structure of the spatial channel correlation matrix to gain useful insights.

\subsection{Channel Estimation and E2E Precoding Problems}\label{subsec_ce_e2e_problem}

Consider a TDD system, whose frame structure is depicted in Fig.~\ref{fig_frame_structure}.
Each frame consists of two phases, i.e., the uplink phase with $L_\mathrm{UL}$ subframes and the downlink phase with $L_\mathrm{DL}$ subframes.
The duration of each subframe is $T_\mathrm{s}$.
\begin{figure}[htbp]
\centering
\includegraphics[width=1\linewidth]{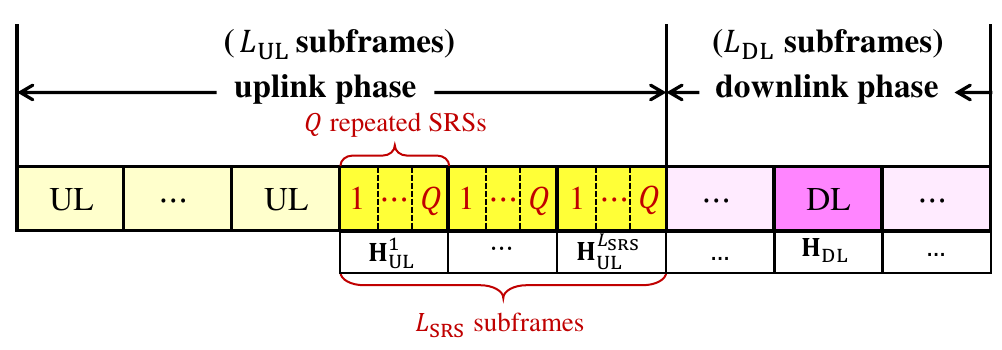}\vspace{-3mm}
\caption{The frame structure of the TDD system.}
\label{fig_frame_structure}
\end{figure}

\subsubsection{Channel Estimation Problem}

The first problem considers uplink channel estimation.

At the end of the uplink phase, $L_\mathrm{SRS}$ subframes are used for sending SRSs, where each subframe consists of $Q$ repeated SRS blocks to improve the signal-to-noise ratio (SNR) for channel acquisition.
In each block, the users send orthogonal SRSs with length $K$ to the BS simultaneously.

Denote the channel vector from the $k$-th user to the BS in the $l$-th subframe for SRS transmission as $\mathbf{h}^{l}_{\mathrm{UL}, k} \in \mathbb{C}^{N \times 1}, l=1,\cdots, L_\mathrm{SRS}$, and denote the uplink channel matrix as $\mathbf{H}_\mathrm{UL} \in \mathbb{C}^{N \times K \times L_\mathrm{SRS}}$, where $\mathbf{H}_\mathrm{UL}[:, k, l] \triangleq \mathbf{h}_{\mathrm{UL}, k}^l$.
Since the SRSs among different users are orthogonal, the effective received SRSs at the BS in the frame are \cite{wang2024learning}
\begin{equation}\label{eq_avg_effective_srs}
\mathbf{R} = \mathbf{H}_\mathrm{UL} +\mathbf{N},
\end{equation}
where $\mathbf{N} \in \mathbb{C}^{N \times K \times L_\mathrm{SRS}}$ is white Gaussian noise with zero mean and variance $\frac{\sigma_\mathrm{UL}^2}{P_\mathrm{UL} Q}$, $P_\mathrm{UL}$ is the uplink transmit power of each user, and $\sigma_\mathrm{UL}^2$ is the uplink noise power.
In fact, $\mathbf{R} \in \mathbb{C}^{N \times K \times L_\mathrm{SRS}}$ is the least squares (LS) estimate of $\mathbf{H}_\mathrm{UL}$ \cite{hampton2013introduction}.

The uplink channels can be estimated from the effective received SRSs by minimizing the MSE, i.e.,
\refstepcounter{problem}\label{prob_ce}
\begin{IEEEeqnarray}{rcl}\label{eq_channel_estimation_problem}
\theproblem: \quad \min_{\widehat{\mathbf{H}}_\mathrm{UL}} & \quad & \mathbb{E}_{\mathtt{H}_\mathrm{UL} | \mathtt{R}}\left[U_\mathrm{MSE}(\mathbf{H}_\mathrm{UL}, \widehat{\mathbf{H}}_\mathrm{UL}) | \mathbf{R}\right],
\end{IEEEeqnarray}
where $\widehat{\mathbf{H}}_\mathrm{UL} \in \mathbb{C}^{N \times K \times L_\mathrm{SRS}}$ is the estimated uplink channel matrix, $U_\mathrm{MSE}(\mathbf{H}_\mathrm{UL}, \widehat{\mathbf{H}}_\mathrm{UL}) \triangleq ||\mathbf{H}_\mathrm{UL} - \widehat{\mathbf{H}}_\mathrm{UL}||_\mathrm{F}^2$, and the expectation is taken over $\mathbf{H}_\mathrm{UL}$ given $\mathbf{R}$.

The channel estimation policy is the mapping from the effective SRSs to the estimated channels, denoted as $\widehat{\mathbf{H}}_\mathrm{UL} = f_\mathrm{CE}(\mathbf{R})$.

\subsubsection{E2E Precoding Problem}

The second problem under consideration is to optimize the downlink precoders in an E2E manner to maximize the downlink sum rate.

According to the analyses in \cite{wang2024learning}, the optimization of the precoders in the $L_\mathrm{DL}$ downlink subframes can be decomposed into $L_\mathrm{DL}$ sub-problems.
Therefore, we optimize the precoder in a single downlink subframe, and the sum rate in this subframe is
\begin{equation}\label{eq_sum_rate}
R(\mathbf{H}_\mathrm{DL}, \mathbf{V})
 \triangleq \sum_{k=1}^K \log_2 \left(1 + \frac{|(\mathbf{h}_{\mathrm{DL}, k})^\mathsf{H} \mathbf{v}_k|^2}{\sum_{j \neq k}|(\mathbf{h}_{\mathrm{DL}, k})^\mathsf{H} \mathbf{v}_j|^2 + \sigma_\mathrm{DL}^2}\right),
\end{equation}
\noindent where $\mathbf{h}_{\mathrm{DL}, k} \in \mathbb{C}^{N \times 1}$ is the channel vector from the BS to the $k$-th user, $\mathbf{v}_k \in \mathbb{C}^{N \times 1}$ is the precoder for the $k$-th user, $\sigma_\mathrm{DL}^2$ is the downlink noise power, $\mathbf{H}_\mathrm{DL} = [\mathbf{h}_{\mathrm{DL}, 1}, \cdots, \mathbf{h}_{\mathrm{DL}, K}] \in \mathbb{C}^{N \times K}$ and $\mathbf{V} \triangleq [\mathbf{v}_1, \cdots, \mathbf{v}_K] \in \mathbb{C}^{N \times K}$ are the downlink channel matrix and the precoding matrix, respectively.

To enhance system performance, we optimize the downlink precoding matrix in an E2E manner.
Since the downlink channel is unknown, the precoding matrix is optimized to maximize the \textit{average} sum rate, i.e.,
\refstepcounter{problem}\label{prob_e2e}
\begin{IEEEeqnarray}{rcl}\IEEEyesnumber\label{eq_e2e_optimization}
\theproblem: \quad \max_{\mathbf{V}} & \quad & \mathbb{E}_{\mathtt{H}_\mathrm{DL} | \mathtt{R}}\left[R\left(\mathbf{H}_\mathrm{DL}, \mathbf{V}\right)|\mathbf{R}\right] \IEEEyessubnumber\\
s.t. & \quad & ||\mathbf{V}||_\mathrm{F}^2 \leq P_\mathrm{DL},\IEEEyessubnumber\label{eq_e2e_power_constraint}
\end{IEEEeqnarray}
where the expectation is taken over $\mathbf{H}_\mathrm{DL}$ given $\mathbf{R}$.
In fact, \textbf{P5} is the stochastic version of \textbf{P2} after considering the randomness of the downlink channel.

The E2E precoding policy is the mapping from the effective SRSs to the optimized precoder, denoted as $\mathbf{V} = f_\mathrm{E2E}(\mathbf{R})$.

\subsection{Permutation Properties of $f_\mathrm{CE}(\cdot)$ and $f_\mathrm{E2E}(\cdot)$}\label{subsec_pe_ce_e2e}

Problems \ref{prob_ce} and \ref{prob_e2e} are special cases of \ref{prob_stochastic}, where the associated objective and constraint functions, i.e., $U_\mathrm{MSE}(\mathbf{H}_\mathrm{UL}, \widehat{\mathbf{H}}_\mathrm{UL})$, $R\left(\mathbf{H}_\mathrm{DL}, \mathbf{V}\right)$ and $c(\mathbf{R}, \mathbf{V}) \triangleq ||\mathbf{V}||_\mathrm{F}^2 - P_\mathrm{DL}$, are PI to both antennas and users.
By Proposition \ref{proposition_pe}, the induced policies $f_\mathrm{CE}(\cdot)$ and $f_\mathrm{E2E}(\cdot)$ are PE if $P_{\mathtt{H}_\mathrm{UL} | \mathtt{R}}(\mathbf{H}_\mathrm{UL} | \mathbf{R})$ and $P_{\mathtt{H}_\mathrm{DL} | \mathtt{R}}(\mathbf{H}_\mathrm{DL} | \mathbf{R})$ are PI.
In what follows, we investigate the PI properties of these distributions to users and antennas, assuming $L_\mathrm{SRS} = 1$ for simplicity.
Then, the three-order tensors $\mathbf{R}$, $\mathbf{N}$ and $\mathbf{H}_\mathrm{UL}$ degenerate into matrices.

\subsubsection{PE Property to Users}

Denote $\mathbf{r}_k$ as the $k$-th column of $\mathbf{R}$, which represents the effective SRS of the $k$-th user.
Under the assumption that different users are spatially separated and statistically identical \cite{gesbert2007shifting}, their channels are independent and identically distributed.
Further considering that the receive noises $\mathbf{N}$ are white Gaussian, it is not hard to show that $P_{\mathtt{H}_\mathrm{UL} | \mathtt{R}}(\mathbf{H}_\mathrm{UL} | \mathbf{R}) = \prod_{k=1}^K P_{\mathtt{h}_{\mathrm{UL}} | \mathtt{r}}(\mathbf{h}_{\mathrm{UL}, k} | \mathbf{r}_k)$ and $P_{\mathtt{H}_\mathrm{DL} | \mathtt{R}}(\mathbf{H}_\mathrm{DL} | \mathbf{R}) = \prod_{k=1}^K P_{\mathtt{h}_{\mathrm{DL}} | \mathtt{r}}(\mathbf{h}_{\mathrm{DL}, k} | \mathbf{r}_k)$.
Therefore, applying any permutation matrix $\boldsymbol{\Pi}_\mathrm{U}$ to the channel matrices and the received SRS merely reorders the factors in the product, which does not change its value due to the commutativity of multiplication.
Hence, $P_{\mathtt{H}_\mathrm{UL} | \mathtt{R}}(\mathbf{H}_\mathrm{UL} | \mathbf{R})$ and $P_{\mathtt{H}_\mathrm{DL} | \mathtt{R}}(\mathbf{H}_\mathrm{DL} | \mathbf{R})$ are PI to users.

According to Proposition \ref{proposition_pe}, the channel estimation policy $f_\mathrm{CE}(\cdot)$ and the E2E precoding policy $f_\mathrm{E2E}(\cdot)$ are PE to users.

\subsubsection{PE Property to Antennas}

Different from the user dimension, the channel coefficients among different antennas may be spatially correlated, and thus the conditional distributions can not be decoupled.

For notational simplicity, we consider the single-user case since the number of users does not affect the permutation property to antennas.
Then, $\mathbf{H}_\mathrm{UL}$ and $\mathbf{R}$ degenerate into vectors $\mathbf{h}_\mathrm{UL}$ and $\mathbf{r}$, respectively, where the superscript $k$ is omitted.
The following proposition provides the conditions for the distributions being PI to antennas.

\begin{proposition}\label{prop_pe_ant}
Consider the following two conditions:
\begin{itemize}
\item[] (a) ${P}_{\mathtt{h}_\mathrm{UL}}(\cdot)$ is PI to antennas.
\item[] (b) $P_{\mathtt{h}_\mathrm{DL} | \mathtt{h}_\mathrm{UL}}(\cdot | \cdot)$ is PI to antennas.
\end{itemize}

If (a) holds, then $P_{\mathtt{h}_\mathrm{UL} | \mathtt{r}}(\mathbf{h}_\mathrm{UL} | \mathbf{r})$ is PI to antennas.

If both (a) and (b) hold, then $P_{\mathtt{h}_\mathrm{DL} | \mathtt{R}}(\mathbf{h}_\mathrm{DL} | \mathbf{R})$ is PI to antennas.
\end{proposition}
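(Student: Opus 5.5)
We will work directly with Bayes' rule and the noise model $\mathbf{r} = \mathbf{h}_\mathrm{UL} + \mathbf{n}$ from \eqref{eq_avg_effective_srs}. Here $\mathbf{n}$ is white Gaussian, independent of $\mathbf{h}_\mathrm{UL}$, with i.i.d.\ entries. Throughout, PI to antennas is understood as invariance when all arguments are simultaneously permuted by $\boldsymbol{\Pi}_\mathrm{A}^\mathsf{T}$.

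\emph{First claim.} Write
\begin{equation*}
P_{\mathtt{h}_\mathrm{UL}|\mathtt{r}}(\mathbf{h}_\mathrm{UL}|\mathbf{r}) = \frac{P_{\mathtt{n}}(\mathbf{r}-\mathbf{h}_\mathrm{UL})\,P_{\mathtt{h}_\mathrm{UL}}(\mathbf{h}_\mathrm{UL})}{\int P_{\mathtt{n}}(\mathbf{r}-\mathbf{h}^\prime)\,P_{\mathtt{h}_\mathrm{UL}}(\mathbf{h}^\prime)\,d\mathbf{h}^\prime}.
\end{equation*}
Replace $\mathbf{h}_\mathrm{UL},\mathbf{r}$ by $\boldsymbol{\Pi}_\mathrm{A}^\mathsf{T}\mathbf{h}_\mathrm{UL},\boldsymbol{\Pi}_\mathrm{A}^\mathsf{T}\mathbf{r}$ and check each factor. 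The noise PDF is a function of $\|\mathbf{r}-\mathbf{h}\|^2$, which is unchanged by a permutation. The prior $P_{\mathtt{h}_\mathrm{UL}}$ is unchanged by (a). In the denominator, change variables $\mathbf{h}^\prime = \boldsymbol{\Pi}_\mathrm{A}^\mathsf{T}\mathbf{h}^{\prime\prime}$; the Jacobian has unit modulus because $\boldsymbol{\Pi}_\mathrm{A}$ is orthogonal. Using (a) again, the integral equals the original marginal $P_{\mathtt{r}}(\mathbf{r})$. So the marginal of $\mathbf{r}$ is itself PI, and therefore the posterior is PI.

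\emph{Second claim.} The downlink channel depends on the SRS only through the uplink channel, so $\mathbf{h}_\mathrm{DL}$ and $\mathbf{r}$ are conditionally independent given $\mathbf{h}_\mathrm{UL}$. This is the Markov chain $\mathbf{r}\to\mathbf{h}_\mathrm{UL}\to\mathbf{h}_\mathrm{DL}$, since the noise is independent of the channels. Marginalizing over the uplink channel gives
\begin{equation*}
P_{\mathtt{h}_\mathrm{DL}|\mathtt{r}}(\mathbf{h}_\mathrm{DL}|\mathbf{r}) = \int P_{\mathtt{h}_\mathrm{DL}|\mathtt{h}_\mathrm{UL}}(\mathbf{h}_\mathrm{DL}|\mathbf{h}_\mathrm{UL})\,P_{\mathtt{h}_\mathrm{UL}|\mathtt{r}}(\mathbf{h}_\mathrm{UL}|\mathbf{r})\,d\mathbf{h}_\mathrm{UL}.
\end{equation*}
Now permute $\mathbf{h}_\mathrm{DL}$ and $\mathbf{r}$ by $\boldsymbol{\Pi}_\mathrm{A}^\mathsf{T}$ and substitute $\mathbf{h}_\mathrm{UL}=\boldsymbol{\Pi}_\mathrm{A}^\mathsf{T}\mathbf{h}^\prime$, again with unit Jacobian. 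Condition (b) makes the first factor invariant. The first claim, which needs (a), makes the second factor invariant. The integral is therefore unchanged, which gives PI of $P_{\mathtt{h}_\mathrm{DL}|\mathtt{R}}$.

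\emph{Main obstacle.} The step needing most care is justifying the Markov structure. If the paper's channel model allows $\mathbf{h}_\mathrm{DL}$ to depend on $\mathbf{r}$ beyond $\mathbf{h}_\mathrm{UL}$, the factorization fails. I would argue it from the independence of the uplink noise from all channels, which holds under \eqref{eq_avg_effective_srs}.

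The complex-valued change of variables also needs a word. A permutation of complex coordinates is a real orthogonal map on $\mathbb{R}^{2N}$, so its Jacobian determinant is $1$.
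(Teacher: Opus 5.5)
Your proposal is correct and follows essentially the same route as the paper's proof in Appendix~\ref{proof_prop_pe_ant}. For the first claim, both use Bayes' rule with the additive-noise model, together with permutation invariance of the white Gaussian noise PDF and of the prior from condition (a). For the second claim, both write $P_{\mathtt{h}_\mathrm{DL}|\mathtt{r}}$ as an integral over $\mathbf{h}_\mathrm{UL}$ and apply a permutation change of variables, invoking (b) and the first claim. The paper uses the Markov structure $\mathbf{r}\to\mathbf{h}_\mathrm{UL}\to\mathbf{h}_\mathrm{DL}$ and the unit Jacobian only implicitly, so your explicit justifications of these are welcome additions rather than departures.
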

\begin{proof}
See in Appendix \ref{proof_prop_pe_ant}.
\end{proof}

Proposition \ref{prop_pe_ant} indicates that the PI property of ${P}_{\mathtt{h}_\mathrm{UL}}(\cdot)$ is a premise for that of $P_{\mathtt{h}_\mathrm{UL} | \mathtt{r}}(\mathbf{h}_\mathrm{UL} | \mathbf{r}), P_{\mathtt{h}_\mathrm{DL} | \mathtt{R}}(\mathbf{h}_\mathrm{DL} | \mathbf{R})$.
However, the condition does not hold in spatially correlated channels.

To see this, we consider the covariance matrix $\mathbf{C} \triangleq \mathbb{E}_{\mathtt{h}_\mathrm{UL}}[\mathbf{h}_\mathrm{UL} (\mathbf{h}_\mathrm{UL})^\mathsf{H}]$ of the channel vector.
It is not hard to prove that if ${P}_{\mathtt{h}_\mathrm{UL}}(\mathbf{h}_\mathrm{UL}) = {P}_{\mathtt{h}_\mathrm{UL}}(\boldsymbol{\Pi}_\mathrm{A}^\mathsf{T}\mathbf{h}_\mathrm{UL})$ holds for any permutation matrix $\boldsymbol{\Pi}_\mathrm{A}$, then all elements in $\mathbf{h}_\mathrm{UL}$ are identically distributed and equally correlated.
As a result, $\mathbf{C}$ exhibits the structure given in Fig.~\ref{fig_cor_mat_structure}(a) and $\mathbf{C} = \boldsymbol{\Pi}_\mathrm{A}^\mathsf{T} \mathbf{C} \boldsymbol{\Pi}_\mathrm{A}$ holds for any permutation matrices $\boldsymbol{\Pi}_\mathrm{A}$.

\begin{figure}[htbp]
\centering
\includegraphics[width=0.7\linewidth]{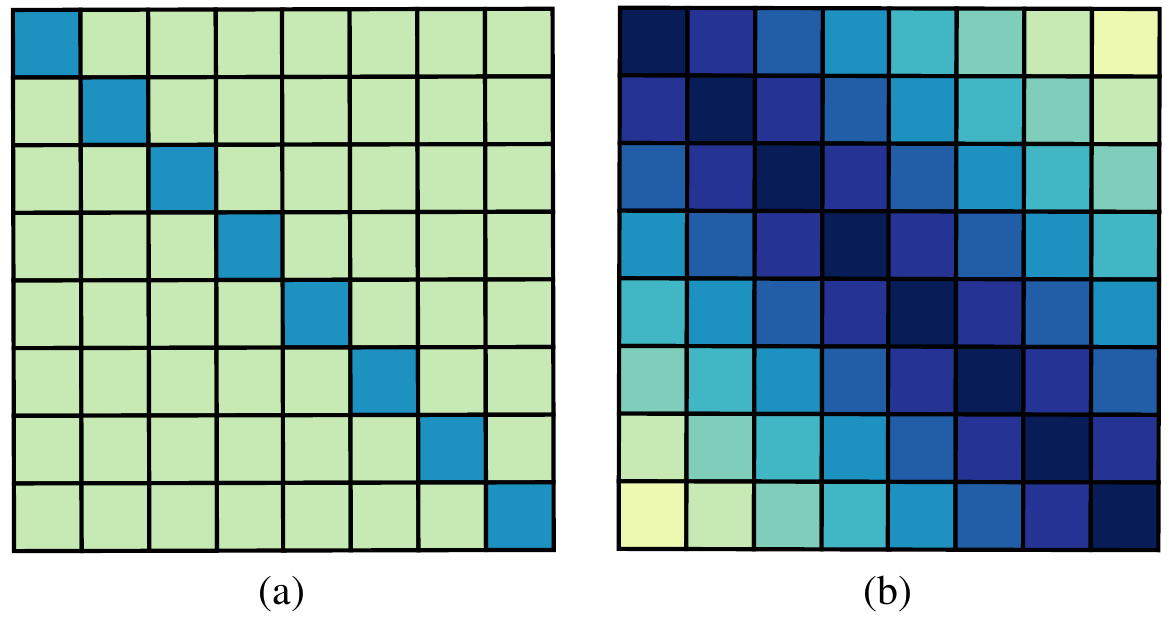}\vspace{-3mm}
\caption{Visualization of (a) symmetric matrices with identical diagonal and off-diagonal elements, and (b) Toeplitz matrices.}
\label{fig_cor_mat_structure}
\end{figure}

We now analyze the structure of covariance matrix of spatially correlated channel, where the correlation is induced by the closely spaced antennas.
Denote the steering vector of the antenna array equipped at the BS as $\mathbf{a}(\phi) \in \mathbb{C}^{N \times 1}$, where $\phi$ is the angle of departure (AoD), then the channel covariance matrix is \cite{liang2001downlink}
\begin{IEEEeqnarray*}{rl}
\mathbf{C} & = \mathbb{E}_\phi\left[\mathbf{a}(\phi) \mathbf{a}^\mathsf{H}(\phi)\right] \triangleq \mathbb{E}_\phi\left[\widetilde{\mathbf{C}}(\phi)\right].\IEEEyesnumber
\label{eq_covariance_matrix}
\end{IEEEeqnarray*}
The values of elements $\tilde{c}_{n, n^\prime}(\phi)$ in $\widetilde{\mathbf{C}}(\phi)$ depend on the relative position of the $n$-th and the $n^\prime$-th antennas, which is a function of the indices $n$ and $n^\prime$.
This makes $\mathbf{C}$ a structural matrix.

For example, if the BS is equipped with a ULA, then the steering vector is $\mathbf{a}_\mathrm{ULA}(\phi) \triangleq [1, \cdots, e^{j 2 \pi (N - 1) \frac{d \sin \phi}{\lambda}}]^\mathsf{T}$, where $d$ is the antenna spacing and $\lambda$ is the wavelength, and $\tilde{c}_{n, n^\prime}(\phi) = e^{j 2\pi (n - n^\prime) \frac{d \sin \phi}{\lambda}}$ depends on the relative position of the $n$-th and $n^\prime$-th antennas (i.e.,  $n - n^\prime$).
As a result, $\widetilde{\mathbf{C}}(\phi)$ and $\mathbf{C}$ are Toeplitz matrices \cite{liang2001downlink}.
In Fig.~\ref{fig_cor_mat_structure}(b), we visualize its structure for comparison, from which we can see that $\mathbf{C}$ for ULAs does not exhibit the structure given in Fig.~\ref{fig_cor_mat_structure}(a).
Similarly, for uniform planar arrays, the corresponding channel covariance matrix is \textit{block-Toeplitz} \cite{chen1993two}, and the elements within it depend on the relative positions of antenna pairs along horizontal and vertical directions.
Only if $d \gg \lambda$ such that the channels between different antennas are approximately \textit{uncorrelated}, the channel covariance matrix will degenerate into a scaled identity matrix with the structure in Fig.~\ref{fig_cor_mat_structure}(a).

According to Propositions \ref{proposition_pe} and \ref{prop_pe_ant}, the conditions under which the channel estimation policy $f_\mathrm{CE}(\cdot)$ and E2E precoding policy $f_\mathrm{E2E}(\cdot)$ are PE are not satisfied in spatially correlated channels.
Moreover, with similar derivations, they are also not guaranteed to be shift-equivariant to antennas in such scenarios.
Therefore, GNNs and CNNs that are PE or shift equivariant to antennas introduce policy-mismatched inductive biases.

As a validation and to help understand the analyses, we examine the PE property of the well-known LMMSE estimator, which has a closed-form expression.
From \eqref{eq_avg_effective_srs}, the LMMSE channel estimate of a single user is \cite{kay1993fundamentals}
\begin{equation}\label{eq_lmmse}
\hat{\mathbf{h}}_\mathrm{UL} = \mathbf{W}^\star\mathbf{r} \triangleq \mathbf{C}(\mathbf{C} + \frac{\sigma_\mathrm{UL}^2}{P_\mathrm{UL} Q}\mathbf{I}_{N})^{-1}\mathbf{r}.
\end{equation}

If the effective SRSs are permuted to $\mathbf{r}^\prime = \boldsymbol{\Pi}^\mathsf{T}_\mathrm{A} \mathbf{r}$, then $\hat{\mathbf{h}}_\mathrm{UL}^\prime = \mathbf{W}^\star\boldsymbol{\Pi}_\mathrm{A}^\mathsf{T}\mathbf{r}$.
If $\mathbf{C}$ has the structure given in Fig.~\ref{fig_cor_mat_structure}(a), it can be proved from \eqref{eq_lmmse} that $\mathbf{W}^\star$ will have the same structure. Then, according to Lemma 3 in \cite{zaheer2017deep}, $\mathbf{W}^\star\boldsymbol{\Pi}_\mathrm{A}^\mathsf{T}=\boldsymbol{\Pi}_\mathrm{A}^\mathsf{T}\mathbf{W}^\star$ holds and we have $\hat{\mathbf{h}}_\mathrm{UL}^\prime = \boldsymbol{\Pi}_\mathrm{A}^\mathsf{T} \hat{\mathbf{h}}_\mathrm{UL}$, which indicates that the LMMSE estimator is PE to antennas.
On the contrary, if $\mathbf{C}$ does not have the structure in Fig.~\ref{fig_cor_mat_structure} (a), then $\mathbf{W}^\star\boldsymbol{\Pi}_\mathrm{A}^\mathsf{T} \neq \boldsymbol{\Pi}_\mathrm{A}^\mathsf{T}\mathbf{W}^\star$ and thus $\hat{\mathbf{h}}_\mathrm{UL}^\prime \neq \boldsymbol{\Pi}_\mathrm{A}^\mathsf{T} \hat{\mathbf{h}}_\mathrm{UL}$, i.e., the LMMSE channel estimator is no longer PE to antennas due to spatial correlation.

\subsection{PA-GNN For Learning $f_\mathrm{CE}(\cdot)$ and $f_\mathrm{E2E}(\cdot)$}

We use the PA-GNN with the update equation in \eqref{eq_update_pgnn} to learn both the channel estimation and E2E precoding policies.
For both tasks, the input of the PA-GNN is the effective SRSs, while the output is either the estimated channels for learning $f_\mathrm{CE}(\cdot)$ or the precoding matrix for learning $f_\mathrm{E2E}(\cdot)$.
Moreover, an additional normalization step is applied for E2E precoding to satisfy the power constraint in \eqref{eq_e2e_power_constraint}.
The same embedding function $\mathrm{EB}(n, n^\prime)$ is used in both cases, while the attention coefficients $\boldsymbol{\omega}_{k, k^\prime, n}^{(l)}$ for aggregating user vertices are different, since the channel estimation can be decoupled among users.

\subsubsection{Embedding Function}

Since the attention coefficients are intended to capture the correlations of hidden representations of antennas, we resort to the spatial channel covariance matrix to guide the design the embedding function.

As analyzed in Section \ref{subsec_pe_ce_e2e}, the structure of the covariance matrix comes from its reliance on the relative position of antenna pairs.
To show how the covariance $c_{n, n^\prime}$ depends on the positional information, we approximate \eqref{eq_covariance_matrix} as
\begin{IEEEeqnarray*}{rcl}
c_{n, n^\prime} & \overset{\text{(a)}} {\approx} & \frac{1}{M} \left[P_\phi (\phi_1), \cdots, P_\phi (\phi_M)\right]
\begin{bmatrix}
\tilde{c}_{n, n^\prime}(\phi_1)\\
\vdots\\
\tilde{c}_{n, n^\prime}(\phi_M)\\
\end{bmatrix}, \IEEEeqnarraynumspace \IEEEyesnumber \label{eq_linear_covmat}
\end{IEEEeqnarray*}
where $P_\phi (\cdot)$ is the PDF of the AoDs, (a) is obtained by approximating the expectation by taking average on samples $\phi_m \triangleq \frac{2 \pi (m - 1)}{M}$, and $M$ is the number of samples of $\phi$ affecting the accuracy of the approximate.

In \eqref{eq_linear_covmat}, the coefficients in $[P_\mathtt{\phi}(\phi_1), \cdots, P_\mathtt{\phi}(\phi_M)]$ reflect the unknown distribution of $\phi$, while the positional information is reflected in $[\tilde{c}_{n, n^\prime}(\phi_1), \cdots, \tilde{c}_{n, n^\prime}(\phi_M)]^\mathsf{T}$.
Hence, we design the embedding function as
\begin{equation}\label{eq_embedding_function}
\mathrm{EB}(n, n^\prime) = [\tilde{c}_{n, n^\prime}(\phi_1), \cdots, \tilde{c}_{n, n^\prime}(\phi_M)].
\end{equation}
Consequently, the resulting $\mathbf{A}^{(l)}$ and the channel covariance matrix $\mathbf{C}$ exhibit the same structure as $\widetilde{\mathbf{C}}(\phi)$.
To see this, we assume $\tilde{c}_{n_1, n_1^\prime}(\phi) = \tilde{c}_{n_2, n_2^\prime}(\phi)$ in $\widetilde{\mathbf{C}}(\phi)$.
Then, we have $c_{n_1, n_1^\prime} = \mathbb{E}[\tilde{c}_{n_1, n_1^\prime}(\phi)] = \mathbb{E}[\tilde{c}_{n_2, n_2^\prime}(\phi)] = c_{n_2, n_2^\prime}$ in $\mathbf{C}$, and $a^{(l)}_{n_1, n_1^\prime} = f_\mathsf{A}^{(l)}(\mathrm{EB}(n_1, n_1^\prime)) = f_\mathsf{A}^{(l)}(\mathrm{EB}(n_2, n_2^\prime)) = a^{(l)}_{n_2, n_2^\prime}$ in $\mathbf{A}^{(l)}$.

According to the definition of $\tilde{c}_{n, n^\prime}(\phi)$, the specific form of the embedding function can be obtained by substituting the steering vectors into \eqref{eq_covariance_matrix}.
For example, when ULA is deployed at the BS, the positional embedding vector is
\begin{equation}
\mathrm{EB}(n, n^\prime) = \left[e^{j 2\pi (n - n^\prime) \frac{d \sin \phi_1}{\lambda}}, \cdots, e^{j 2\pi (n - n^\prime) \frac{d \sin \phi_M}{\lambda}}\right]^\mathsf{T},
\end{equation}
which is a function of $n - n^\prime$.
Consequently, $\mathbf{W}_\mathrm{G}^{(l)}$ is a Toeplitz matrix.
Similarly, if a UPA is deployed, the corresponding embedding function can be derived from the UPA steering vector provided in \cite{chen1993two}.

\subsubsection{Attention Coefficients for Aggregating User Vertices}

Due to the orthogonality of the SRSs, the channel estimation problem in \eqref{eq_channel_estimation_problem} can be decoupled among users, and thus the update of $\mathbf{d}_{n, k}^{(l)}$ does not require aggregating information from edges $(n, k^\prime)$ connecting to other users.
Therefore, we set $\boldsymbol{\omega}^{(l)}_{k, k^\prime, n}$ as zero for learning channel estimation.
The update equation for estimating the $k$-th user's channel vector is
\begin{IEEEeqnarray}{rcl}\label{eq_update_pgnn_ce}
\mathbf{d}_{n, k}^{(l)} & = & f_\mathsf{C}^{(l)}\Big(\mathring{\mathbf{d}}_{n, k}^{(l - 1)}, \sum_{n^\prime = 1}^{N}f_\mathsf{A}^{(l)}(\mathrm{EB}(n, n^\prime)) \odot \mathring{\mathbf{d}}_{n^\prime, k}^{(l - 1)}\Big).\IEEEeqnarraynumspace
\end{IEEEeqnarray}

For learning the E2E precoding policy, we compute $\boldsymbol{\omega}^{(l)}_{k, k^\prime, n}$ in the same way as \cite{liu2024multidimensional} to leverage channel correlation between users.
The resulting update equation of the PA-GNN, which is PE to users, is
\begin{IEEEeqnarray}{rl}\label{eq_update_p2gnn}
\mathbf{d}_{n, k}^{(l)} = & f_\mathsf{C}^{(l)}\Bigg(\mathbf{d}_{n, k}^{(l - 1)}, \sum_{n^\prime = 1}^{N}f_\mathsf{A}^{(l)}(\mathrm{EB}(n, n^\prime)) \odot \mathring{\mathbf{d}}_{n^\prime, k}^{(l - 1)},\IEEEnonumber\\
& \quad \sum_{k^\prime = 1}^K f_\mathsf{U}^{(l)}(\mathbf{d}_k^{(l)}, \mathbf{d}_{k^\prime}^{(l)}) \odot \mathring{\mathbf{d}}^{(l)}_{n, k^\prime}\Bigg),
\end{IEEEeqnarray}
where $f_\mathsf{U}^{(l)}(\mathbf{d}_k^{(l)}, \mathbf{d}_{k^\prime}^{(l)}) \triangleq \sum_{n = 1}^{N} \mathbf{W}^{(l)}_\mathsf{Q}\mathbf{d}_{n, k}^{(l)} \odot \mathbf{W}^{(l)}_\mathsf{K}\mathbf{d}_{n, k^\prime}^{(l)} / N$.

\section{Simulation Results}

In this section, we evaluate the performance of the proposed PA-GNN for learning channel estimation and E2E precoding.

\subsection{Simulation Setup}

Simulation setups are as follows unless otherwise specified.

According to the 3GPP standard \cite{3GPP38211}, each frame consists of 10 subframes, each of which is with duration of $T_\mathrm{s} = 1$ ms.
The numbers of uplink and downlink subframes are set as $L_\mathrm{UL} = L_\mathrm{DL} = 5$ \cite{wang2024learning}.
The carrier frequency of the system is 3.5 GHz.
The uplink and downlink SNRs are denoted as $\mathrm{SNR}_\mathrm{UL} \triangleq \frac{P_\mathrm{UL}}{\sigma_\mathrm{UL}^2}$ and $\mathrm{SNR}_\mathrm{DL} \triangleq \frac{P_\mathrm{DL}}{\sigma_\mathrm{DL}^2}$, respectively.

The channels of different users are independent, which are generated with the time-varying Saleh-Valenzuela channel model \cite{saleh1987statistical}.
The BS is equipped with a ULA, and the channel vector of the $k$-th user in the $l$-th uplink subframe is
\begin{equation}\label{eq_channel_model}
\mathbf{h}_{\mathrm{UL}, k}^l = \frac{1}{\sqrt{N_\mathrm{cl} N_\mathrm{ray}}}\sum_{a = 1}^{N_\mathrm{cl}}\sum_{b = 1}^{N_\mathrm{ray}}\alpha_k^{a, b}e^{j 2 \pi l T_\mathrm{s} f_\mathrm{d}^{a, b}}\mathbf{a}_\mathrm{ULA}(\phi_k^{a, b}),
\end{equation}
where $N_\mathrm{cl}$ and $N_\mathrm{ray}$ are the numbers of clusters and scattering paths in each cluster, respectively, $\alpha_k^{a, b}$ and $\phi_k^{a, b}$ are the complex channel gain and the AoD, respectively, $f_\mathrm{d}^{a, b} \triangleq \frac{v_k}{\lambda} \cos(\theta_\mathrm{d}^{a, b})$ is the Doppler frequency shift, $v_k$ is the speed of the $k$-th user, and $\theta_\mathrm{d}^{a, b}$ is the movement direction relative to the angle of arrival of the $k$-th user.
The settings of these parameters are given in Table \ref{tab_sim_param}.

\begin{table}[htbp]
\caption{Simulation Parameters}
\vspace{-5mm}
\label{tab_sim_param}
\begin{center}
\renewcommand{\arraystretch}{1.2}
\begin{tabular}{|c|c|}
\hline
\textbf{Parameters} & \textbf{Values} \\
\hline
$N_\mathrm{cl}, N_\mathrm{ray}$ & 2, 5 \\
\hline
$\alpha_k^{a, b}$ & $\mathcal{CN}(0, 1)$\\
\hline
$\phi_k^{a, b}$ & $\mathbb{U}(\bar{\phi}_k^a - \Delta_\phi, \bar{\phi}_k^a + \Delta_\phi)$\\
\hline
$\bar{\phi}_k^a$ & $\mathbb{U}(-\pi / 2, \pi / 2)$\\
\hline
$\Delta_\phi$ & $\pi / 36$\\
\hline
$v_k$ & 30 km/h \\
\hline
$\theta_\mathrm{d}^{a, b}$ & $\mathbb{U}(-\pi, \pi)$\\
\hline
\end{tabular}
\end{center}
\end{table}

According to the frame structure shown in Fig.~\ref{fig_frame_structure} and the channel reciprocity, the channel vector of the $k$-th user in the $l$-th downlink subframe is obtained by replacing the index $l$ with $l + L_\mathrm{SRS}$ in \eqref{eq_channel_model}, i.e., $\mathbf{\mathbf{h}}_{\mathrm{DL}, k}^l = \mathbf{\mathbf{h}}_{\mathrm{UL}, k}^{l + L_\mathrm{SRS}}$.

We generate a training set consisting of 100,000 samples, and the fine-tuned hyper-parameters of the proposed PA-GNN are listed in Table \ref{tab_hyperparameter}.
For channel estimation, each sample is a pair of $\mathbf{R}$ and $\mathbf{H}_\mathrm{UL}$. The DNNs  with input $\mathbf{R}$ are trained to minimize the MSE between its output (i.e., the channel estimate) and the label $\mathbf{H}_\mathrm{UL}$ via \textit{supervised learning}.
For E2E precoding, each sample is a realization of $\mathbf{R}$.
The DNNs with input $\mathbf{R}$ are trained to maximize the downlink sum rate via \textit{unsupervised learning}, where the loss function is the negative sum rate.

\begin{table}[htbp]
\caption{Hyper-parameters}
\vspace{-5mm}
\label{tab_hyperparameter}
\begin{center}
\begin{tabular}{|c|c|c|}
\hline
\textbf{Hyper-} & \textbf{Channel} & \textbf{E2E} \\
\textbf{parameters} & \textbf{Estimation} & \textbf{Precoding}\\
\hline
$N_\mathrm{L}$ & 4 & 5 \\
\hline
$C_l$ & \multicolumn{2}{c|}{128} \\
\hline
Hidden Layer Activation & \multicolumn{2}{c|}{Leaky ReLU}\\
\hline
Output Layer Activation & Linear & $\frac{\sqrt{P_\mathrm{DL}} \mathbf{D}^{(N_\mathrm{L})}}{||\mathbf{D}^{(N_\mathrm{L})}||_\mathrm{F}}$\\
\hline
$f_\mathsf{P}^{(l)}(\cdot)$ & - & Linear \\
\hline
$f_\mathsf{C}^{(l)}(\cdot)$ & \multicolumn{2}{c|}{Single-layer FNN} \\
\hline
$f_\mathsf{A}^{(l)}(\cdot)$ & \multicolumn{2}{c|}{Single-layer FNN ($M = 128$)}\\
\hline
Batch Size & \multicolumn{2}{c|}{256}\\
\hline
Learning Rate & \multicolumn{2}{c|}{0.01}\\
\hline
\multicolumn{3}{l}{* The input dimension of $f_\mathsf{C}^{(l)}(\cdot)$ and the output dimensions}\\
\multicolumn{3}{l}{\textcolor{white}{*} of $f_\mathsf{C}^{(l)}(\cdot)$ and $f_\mathsf{A}^{(l)}(\cdot)$ are determined by $C_l$.}
\end{tabular}
\end{center}
\end{table}
\vspace{-5mm}

\subsection{Channel Estimation}\label{subsection_sim_ce}

For channel estimation, we consider $K=1$ since the problem can be decoupled among different users, and set $L_\mathrm{SRS} = Q = 1$ to estimate the channel in a single subframe.
We compare the PA-GNN with the following baselines.
\begin{itemize}
\item \textbf{LS:} The channel estimator that regards $\mathbf{R}$ in \eqref{eq_avg_effective_srs} as the estimate of $\mathbf{H}_\mathrm{UL}$.

\item \textbf{LMMSE:} The LMMSE estimator in \eqref{eq_lmmse}, where $\mathbf{C}$ is obtained from \eqref{eq_linear_covmat} with $M = 512$.

\item \textbf{OMP:} The channel estimation algorithm used in \cite{liu2024sparse}, where the dictionary contains $N$ steering vectors uniformly distributed in the angular domain.

\item \textbf{Vanilla GNN:} The GNN designed in \cite{zhao2024understanding} without attention, which is PE to both users and antennas.\footnote{The PE property of users is unnecessary when $K=1$.}

\item \textbf{CNN with Attention:} The DNN proposed in \cite{gao2021attention}, which integrates CNN with an attention mechanism.

\item \textbf{Dual CNN:} The DNN proposed in \cite{jiang2021dual}, where two CNNs learning in spatial and angular domains, respectively, are cascaded together for channel estimation.

\item \textbf{Transformer:} The encoder part of the Transformer architecture in \cite{vaswani2017attention}.
The effective SRSs received by different antennas are regarded as different tokens.
Positional encoding is reserved to capture the spatial correlation.

\item \textbf{Transformer (w/o p.e.)} The only different to Transformer is removing the positional encoding, and it is PE to antennas.
\end{itemize}

The baseline DNNs, except for the Dual CNN, have the same numbers of hidden layers and the same hidden dimensions as the PA-GNN for a fair comparison.
For the Dual CNN, each of the two CNNs consists of four hidden layers, with each layer containing 32 neurons, as in \cite{gao2021attention}.

\subsubsection{Performance versus System and Environment Parameters}

For a fair comparison, we temporarily do not evaluate the performance of the Dual CNN, since it leverages information from angular domain while other DNNs only learn in spatial domain.
The performance of DNNs when learning in angular domain will be compared later.

In Fig~\ref{fig_ce_snr}, we compare the channel estimation accuracy of different methods under different uplink SNRs.
It is shown that the performance of vanilla GNN and Transformer (w/o p.e.) nearly overlaps with that of LMMSE, whereas other learning-based channel estimators outperform non-learning methods.
The performance of vanilla GNN and Transformer (w/o p.e.) is inferior since they can only learn PE policies with respect to antennas, whereas the original channel estimation policy is not PE. 
According to our previous analysis, positional information should be incorporated so that the network can learn the corresponding augmented policy.
With positional encoding, the Transformer can leverage antenna positional information and thus achieves improved performance.
The proposed PA-GNN is with the lowest MSE, since its positional attention mechanism incorporates steering vectors as prior knowledge.

\begin{figure}[htbp]
\centering
\includegraphics[width=0.9\linewidth]{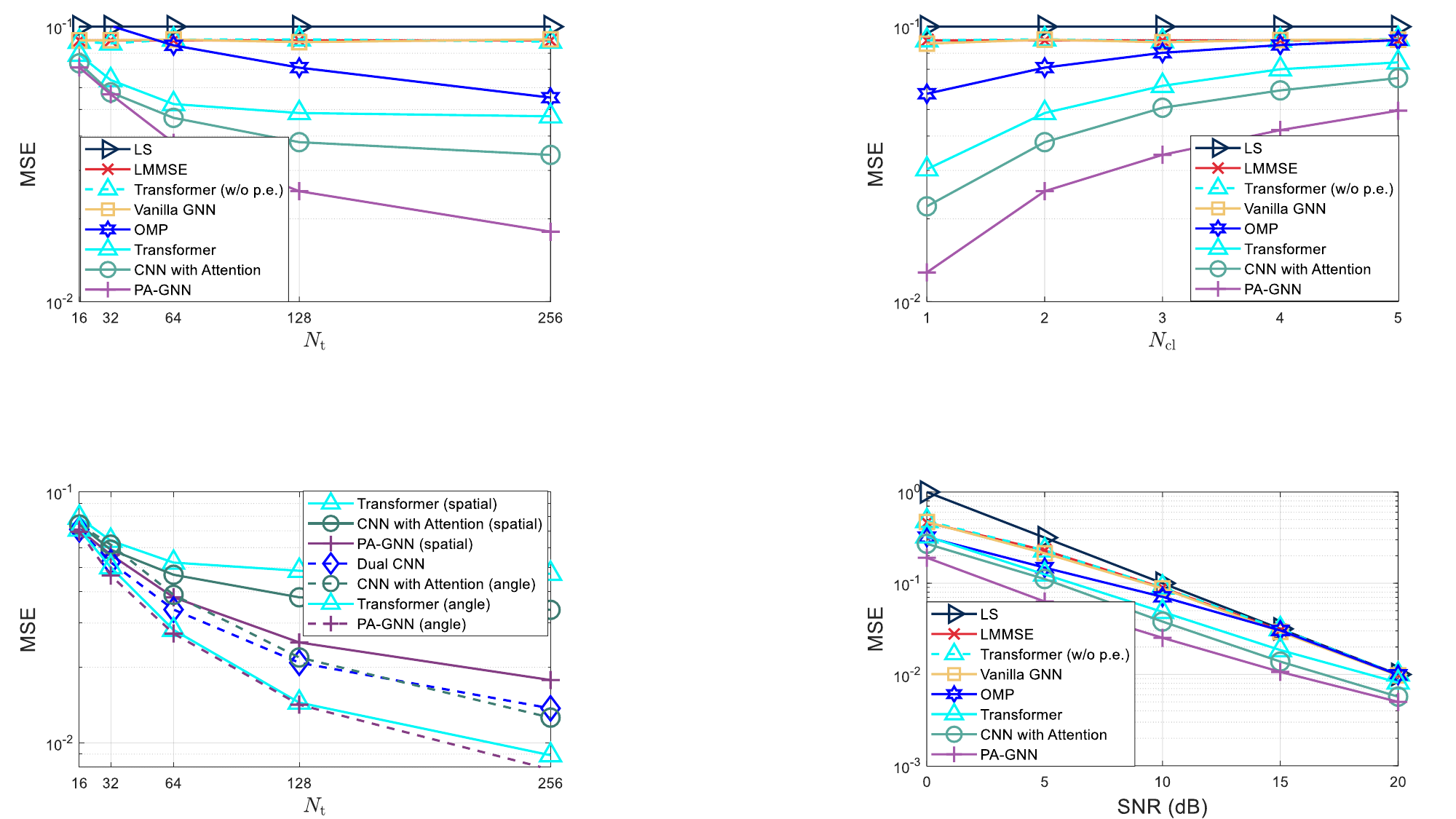}\vspace{-2mm}
\caption{MSE versus $\text{SNR}_\mathrm{UL}$, $N = 128$, $N_\mathrm{cl} = 2$.}
\label{fig_ce_snr}
\end{figure}

In Fig.~\ref{fig_ce_ant}, we show the impact of the number of antennas. It is shown that the MSEs of all methods decrease with more antennas, since more information can be used for estimating the channel coefficient of each antenna due to the spatial correlation.
Again, the proposed PA-GNN achieves the lowest MSE among all methods, and the performance gain over other DNNs comes from the proper inductive bias.

\begin{figure}[htbp]
\centering
\includegraphics[width=0.9\linewidth]{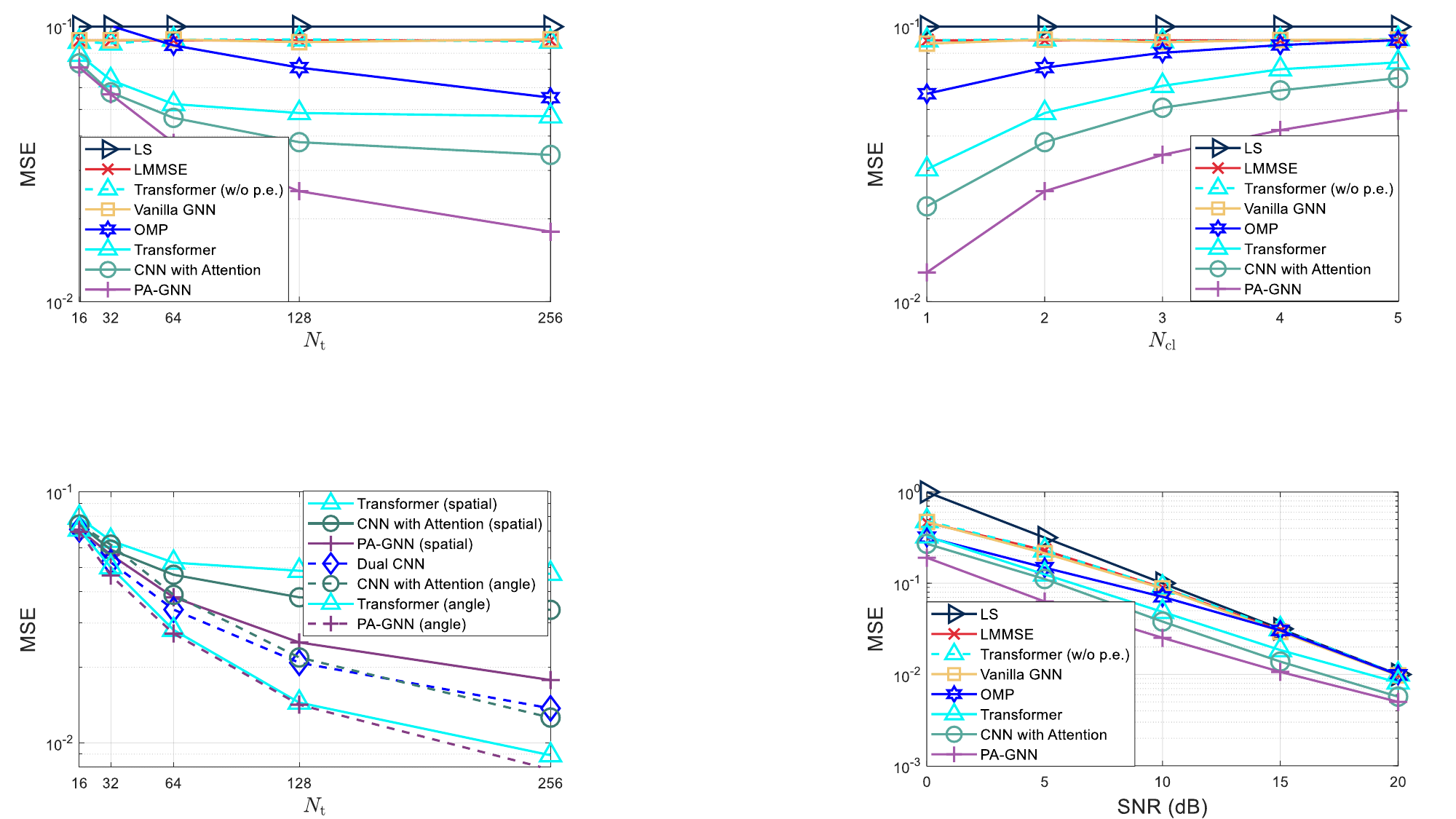}\vspace{-2mm}
\caption{MSE versus $N$, $\mathrm{SNR}_\mathrm{UL} = 10$ dB, $N_\mathrm{cl} = 2$.}
\label{fig_ce_ant}
\includegraphics[width=0.9\linewidth]{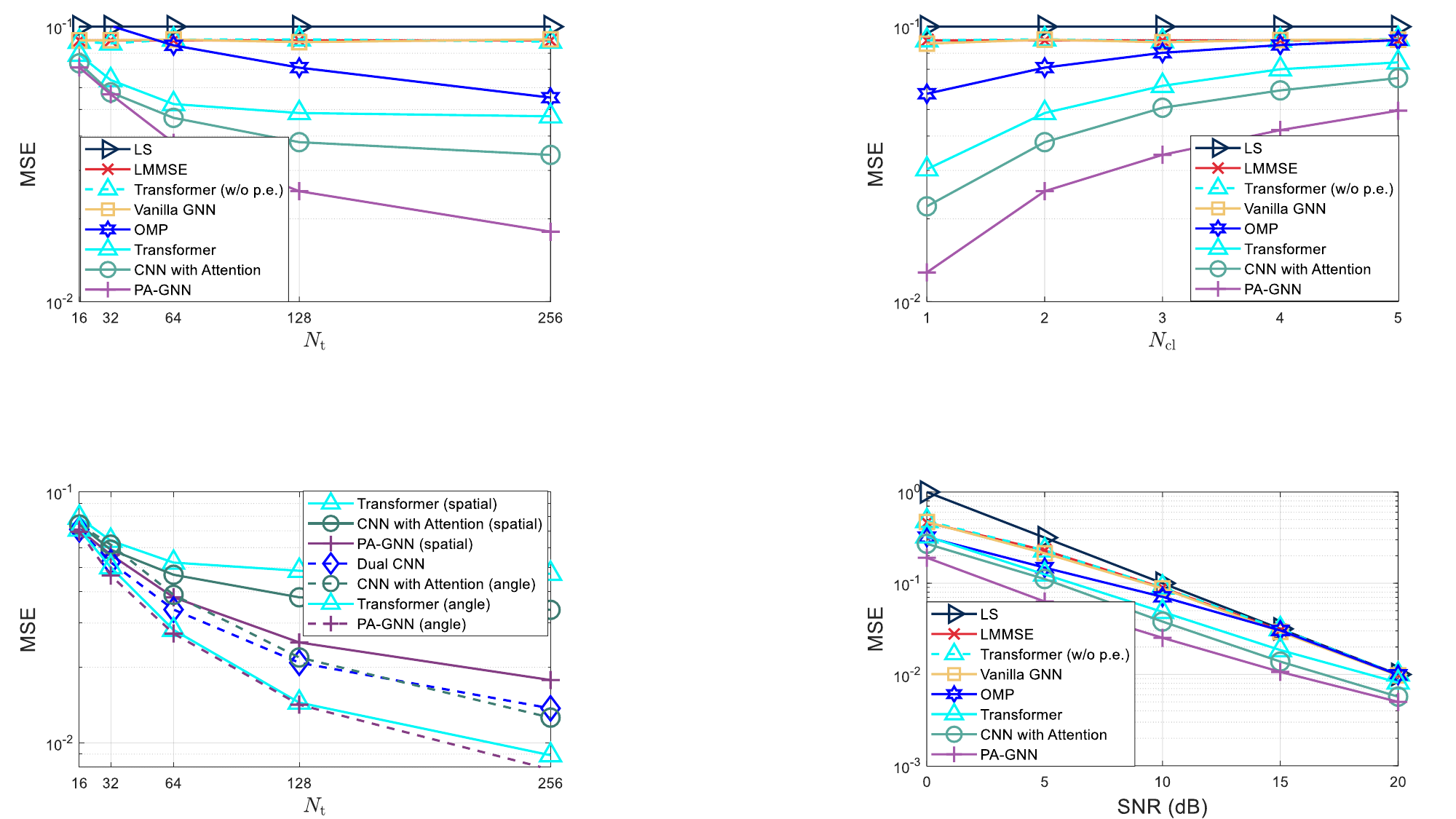}\vspace{-2mm}
\caption{MSE versus $N_\mathrm{cl}$, $\mathrm{SNR}_\mathrm{UL} = 10$ dB, $N = 128$.}
\label{fig_ce_cluster}
\end{figure}

In Fig.~\ref{fig_ce_cluster}, we show the impact of the channel sparsity.
We can see that the MSEs of all methods are lower when the number of clusters is small.
This is because the spatial correlation is stronger in a sparser channel, which leads to a more accurate estimate. The PA-GNN achieves significant lower MSE than other methods.

\subsubsection{Performance of Learning in Angular Domain}

In previous simulations, the inputs of the DNNs are the effective received SRS $\mathbf{R}$.
As suggested in \cite{gao2021attention, jiang2021dual}, we can also learn the channel estimation policy in angular domain by inputting the discrete Fourier transform of $\mathbf{R}$ into the DNNs, and the channel estimate is then obtained by applying the inverse Fourier transform to their outputs.

The estimation accuracies in spatial and angular domains are compared in Fig.~\ref{fig_ce_domain}, and the results of the vanilla GNN and Transformer (w/o p.e.) are not provided due to their inferior performance.\footnote{``(spatial)'' and ``(angle)'' in the legends of Fig.~\ref{fig_ce_domain} refer to learning in spatial and angular domains, respectively.}
We can see that compared to learning in spatial domain, a higher accuracy can be achieved when learning in angular domain, and the gain is more obvious with larger $N$.
When learning in angular domain, the proposed PA-GNN still achieves the best performance among all DNNs, and the Transformer can achieve nearly identical performance.
To further demonstrate the advantage of the proposed PA-GNN, we consider a scenario where the BS is equipped with a UPA, and the channel estimation error achieved by the PA-GNN is reduced by 16\% compare to the Transformer when $N = 128$.
The performance degradation of the Transformer stem from its positional encoding being designed for sequential data, which aligns with the structure of ULAs but not with UPAs.

\begin{figure}[htbp]
\centering
\includegraphics[width=0.9\linewidth]{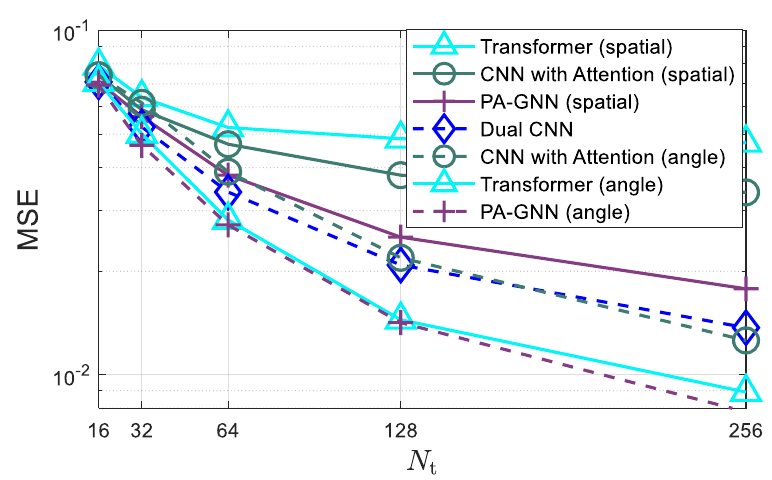}\vspace{-2mm}
\caption{MSE versus $N$ when learning in different domains, $\mathrm{SNR}_\mathrm{UL} = 10$ dB, $N_\mathrm{cl} = 2$.}
\label{fig_ce_domain}
\end{figure}

\subsection{E2E Precoding}

For E2E precoding, we take the maximization of the sum rate of the second downlink subframe as an example.
Unless otherwise specified, we set $L_\mathrm{SRS} = Q = 2$, and $\text{SNR}_\mathrm{UL} = \text{SNR}_\mathrm{DL} = 10$ dB. We compare the PA-GNN with the following baselines.
\begin{itemize}
\item \textbf{WMMSE (perf.):} The precoding matrix is obtained by the WMMSE algorithm in \cite{shi2011iteratively} with perfect channel matrix.
It serves as an \textit{unachievable} performance upper bound for comparison.

\item \textbf{WMMSE (outdate):} The precoding matrix is obtained by the WMMSE algorithm \cite{shi2011iteratively} with the LMMSE estimate of the channel matrix $\mathbf{H}_\mathrm{UL}^{L_\mathrm{SRS}}$.

\item \textbf{GNN-2D:} The GNN proposed in \cite{liu2024multidimensional}. The only difference from the PA-GNN is to use identical weights for aggregating antenna vertices, whose update equation can be obtained by removing $f^{(l)}_\mathsf{A}(\mathrm{EB}(n, n^\prime))$ in \eqref{eq_update_p2gnn}.
The policy learned by the GNN-2D is PE to both antennas and users.

\item \textbf{GNN-1D2D:} The GNN proposed in \cite{zhao2023learning}, which cascades two GNNs for learning the E2E precoding policy.
The former GNN is PE to users without introducing parameter sharing among antenna vertices, while the later GNN is the GNN-2D.
The policy learned by the GNN-1D2D is only PE to users.

\item \textbf{Transformer:} The same Transformer as used in Section \ref{subsection_sim_ce}, with an extra normalization step to enforce the power constraint on the outputs.
We regard the effective SRSs received by different antennas as different tokens and employ positional encoding along the antenna dimension for comparison.

\end{itemize}

\subsubsection{Performance versus System and Environment Parameters}

In Fig.~\ref{fig_e2e_snr}, we compare the sum rate achieved by different methods under different downlink SNRs.
Only partial results of WMMSE (perf.) are provided, in order to show the performance difference among other methods when the downlink SNR is high.
It can be observed that all learning-based methods outperform WMMSE (outdated).
This is because the DNNs for E2E learning can predict channels implicitly by leveraging temporal correlation, and achieve lower MSE than the LMMSE channel estimator.
Meanwhile, the performance of the WMMSE algorithm is very sensitive to the outdating of channels.
Moreover, the proposed PA-GNN performs the best, and the gains over other DNNs grow as the downlink SNR increases.

\begin{figure}[htbp]
\centering
\includegraphics[width=0.9\linewidth]{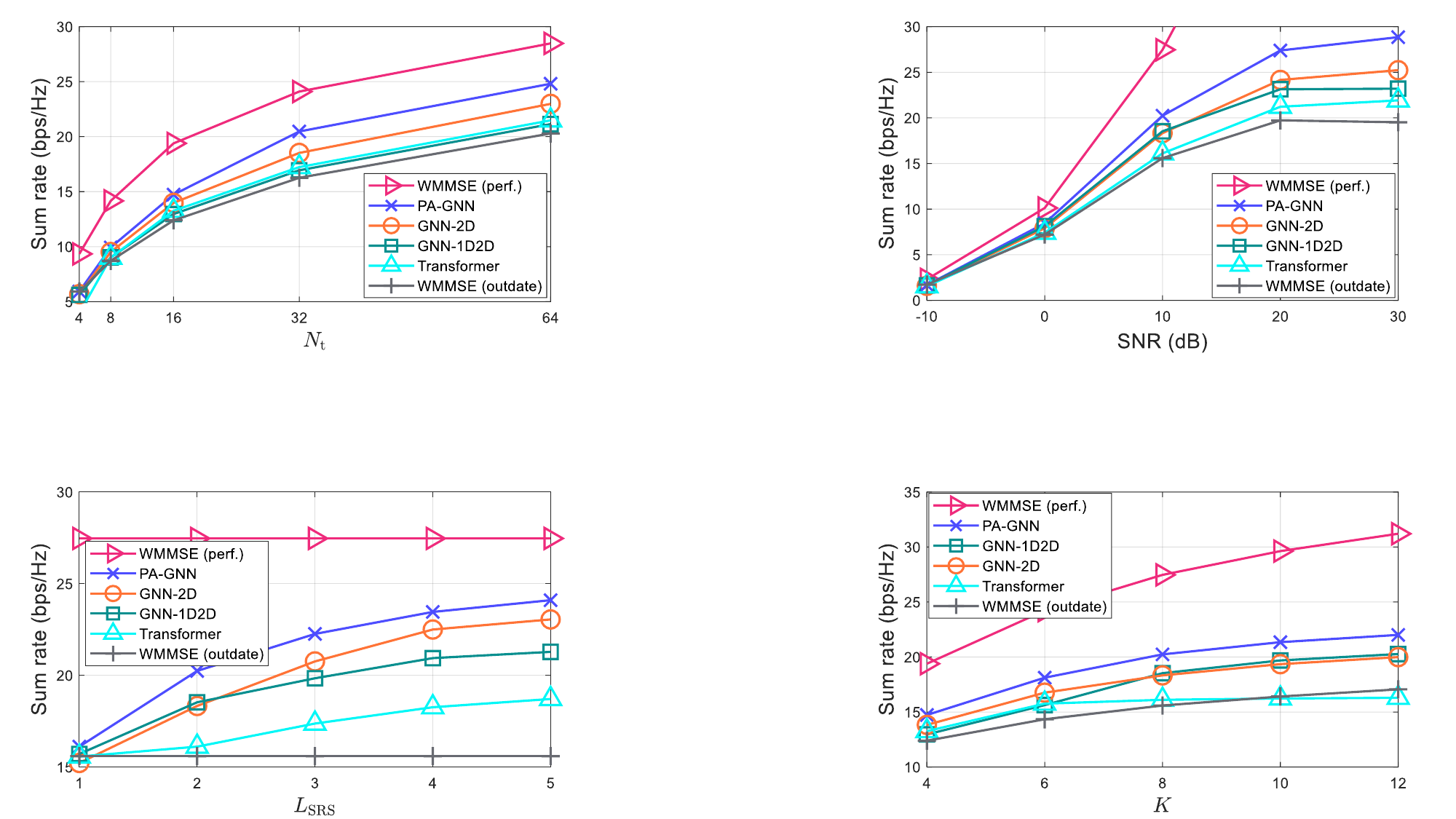}\vspace{-2mm}
\caption{Sum rate versus $\text{SNR}_\mathrm{DL}$, $N = 16$, $K = 8$.}
\label{fig_e2e_snr}
\end{figure}

\begin{figure}[htbp]
\centering
\includegraphics[width=0.9\linewidth]{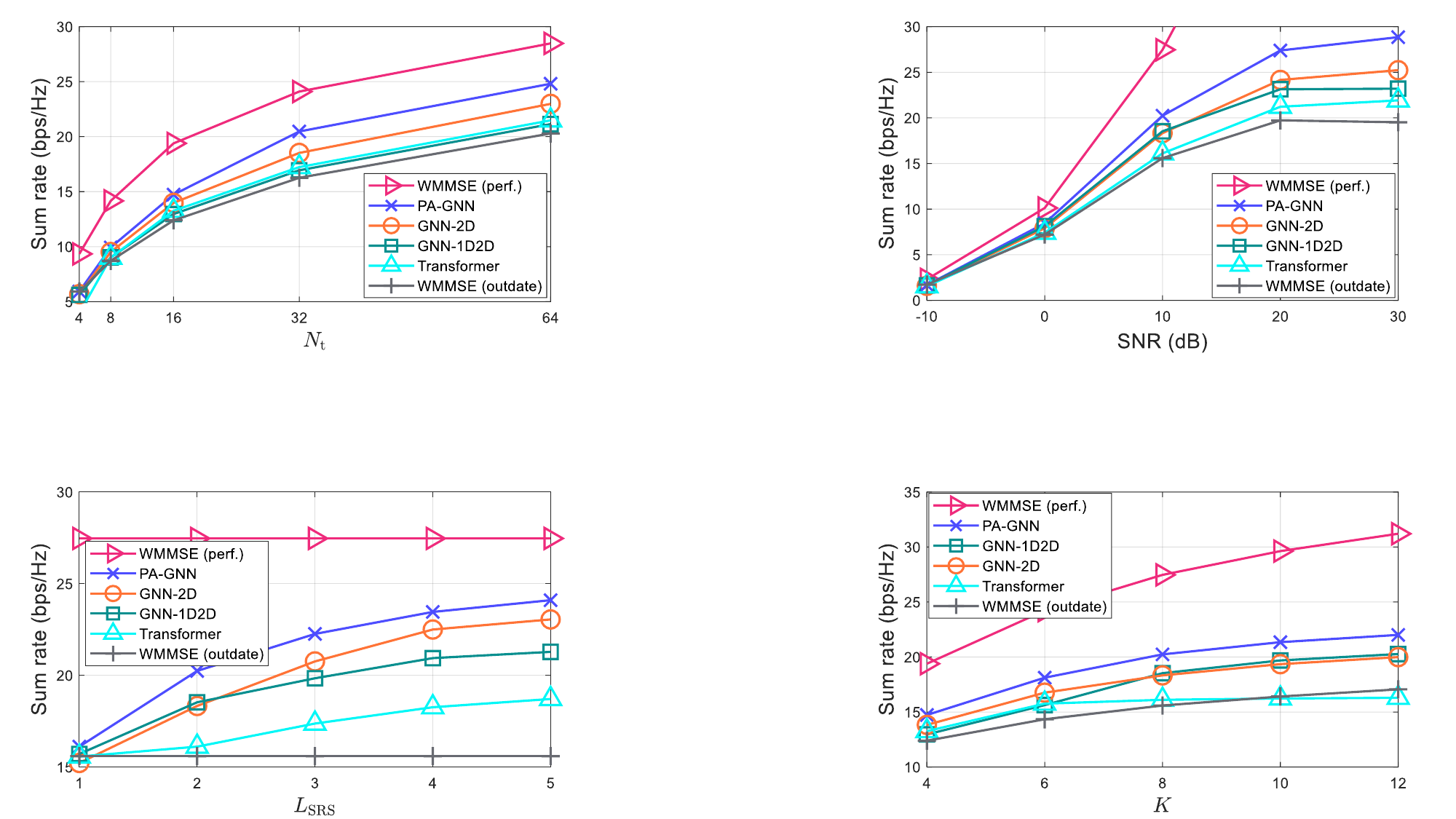}\vspace{-2mm}
\caption{Sum rate versus $K$, $N = 16$.}
\label{fig_e2e_user}
\includegraphics[width=0.9\linewidth]{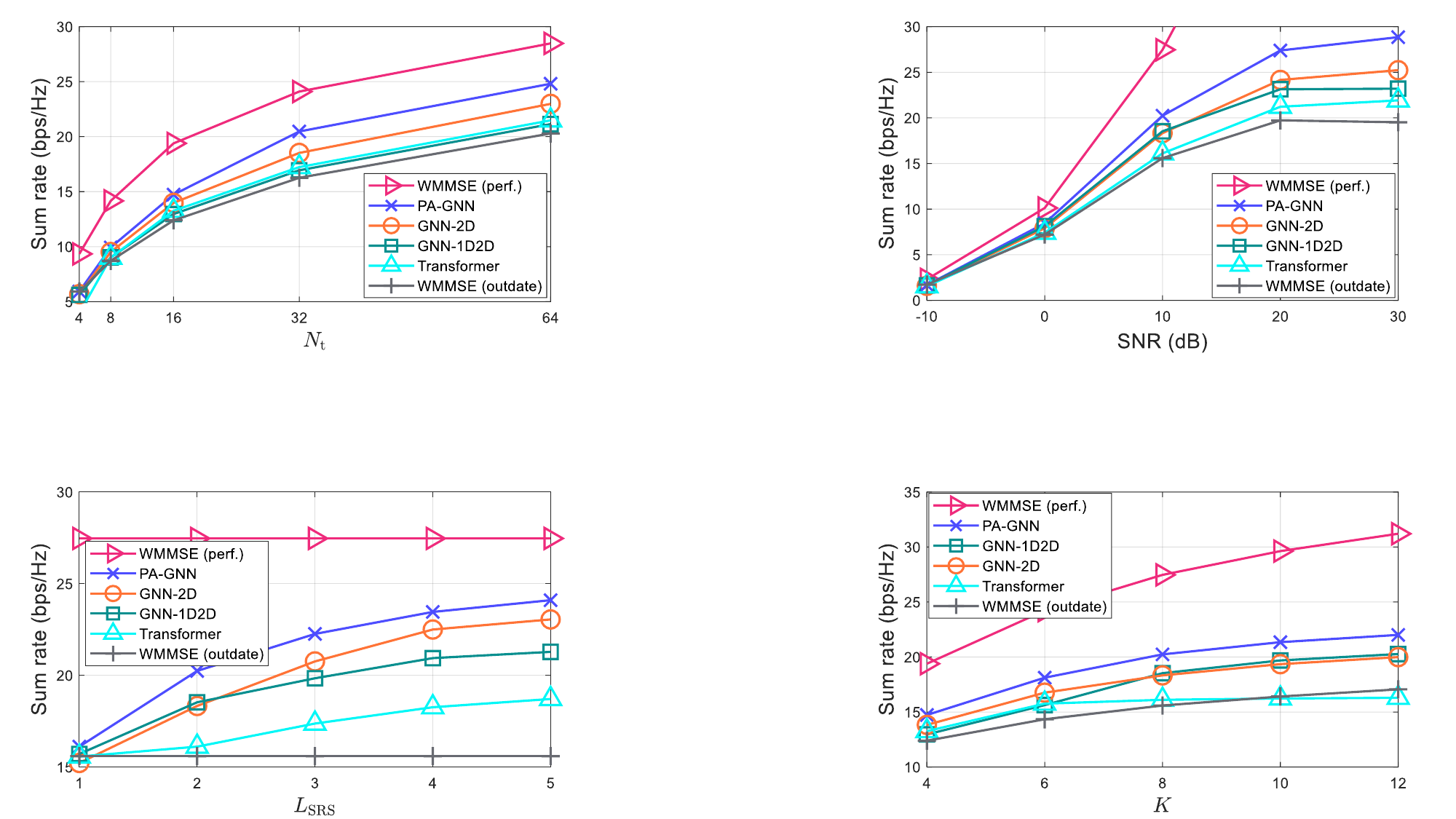}\vspace{-2mm}
\caption{Sum rate versus $N$, $K = 4$.}
\label{fig_e2e_ant}
\end{figure}

In Fig.~\ref{fig_e2e_user} and Fig.~\ref{fig_e2e_ant}, we provide the sum rates versus the numbers of users and antennas.
We can see that the proposed PA-GNN achieves the best performance among all the methods except the unachievable upper bound.
The GNN-1D2D and GNN-2D are inferior to the PA-GNN for different reasons.
Specifically, GNN-1D2D does not share parameters among antenna vertices in its initial layers and thus has larger hypothesis space, leading to an overfitting for channel estimation \cite{zhao2024understanding}.
GNN-2D, by contrast, enforces permutation equivariance along the antenna dimension without incorporating positional information, and therefore cannot represent the statistical distinctions among antennas required by the E2E policy.
The performance of the Transformer is poor when $K$ and $N$ are close.
This is because, under such scenarios, the precoding policy is hard to learn due to severe multi-user interference, while the Transformer applying attention along the antenna dimension fails to capture the correlation among users.

\begin{figure}[htbp]
\centering
\includegraphics[width=0.9\linewidth]{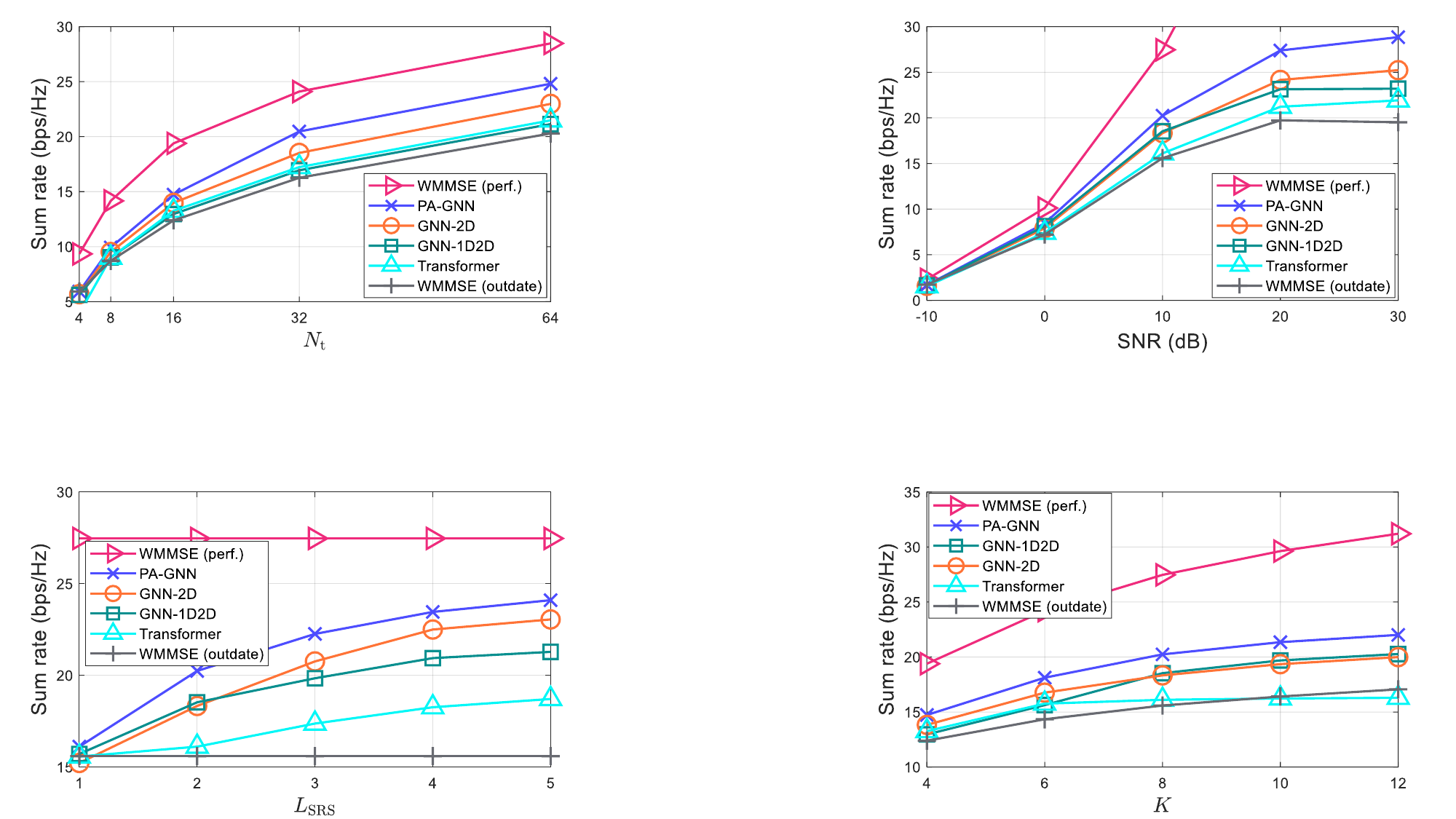}\vspace{-2mm}
\caption{Sum rate versus $L_\mathrm{SRS}$, $N = 16$, $K = 8$.}
\label{fig_e2e_srs}
\end{figure}

In Fig.~\ref{fig_e2e_srs}, we evaluate the sum rates versus the number of uplink subframes for transmitting SRSs.
It can be observed that the performance of all learning-based methods increases with $L_\mathrm{SRS}$, since a more accurate channel matrix can be obtained with more historical information.
Nonetheless, there is still a gap to the performance upper bound even if larger $L_\mathrm{SRS}$ is considered (e.g., the performance of the PA-GNN is no more than 25 bps/Hz), because the future channels can not be predicted perfectly.

\subsubsection{Sample Complexity}

In Fig.~\ref{fig_sample_complexity}, we compare the sample complexities of the GNNs by evaluating their performance with different numbers of training samples.

As can be seen, when the number of training samples is large (say more than 50,000), the sum rate achieved by the Transformer is the lowest, since it can not mitigate the multi-user interference effectively as mentioned before.
The GNN-2D also performs poorly, since it can only learn 2D-PE functions, and thus its hypothesis space does not include the original E2E precoding policy without positional augmentation.

When the training samples are fewer, the performance of GNN-1D2D and Transformer degrades rapidly.
This is because GNN-1D2D or Transformer does not introduce parameter sharing among antenna or user vertices, respectively, and thus has larger hypothesis space.
As a result, more training samples are required to achieve a satisfactory performance \cite{zhao2024understanding}.

\begin{figure}[htbp]
\centering
\includegraphics[width=0.9\linewidth]{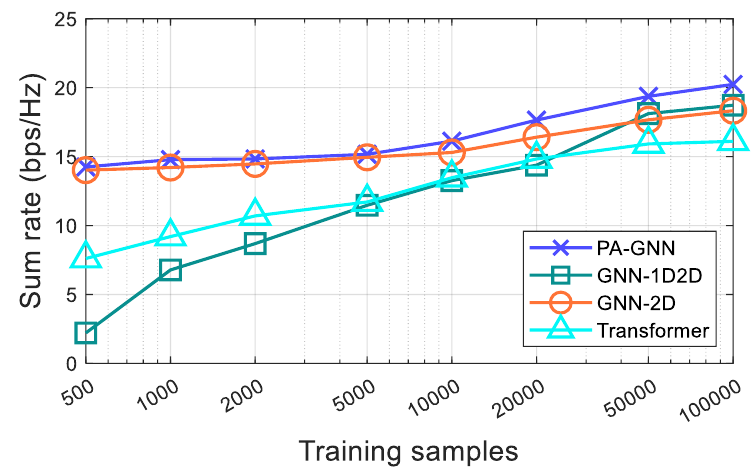}\vspace{-2mm}
\caption{Sample complexity of the GNNs, $N = 16, K = 8$.}
\label{fig_sample_complexity}
\end{figure}

\subsubsection{Size Generalizability}

We evaluate the generalizability of the GNNs to the numbers of transmit antennas and users, by testing the GNNs in the scenarios ``unseen'' by the training samples.
The performance is measured by the sum rate achieved on test samples from the new environment, normalized by the performance of a PA-GNN trained and evaluated within the new environment (with legend ``Ref. Perf.'').

In Table \ref{tab_dim_gen_user}, we compare the generalizability of the GNNs to the number of users, where the GNNs are trained with the samples in a scenario of $K = 8$.
The Transformer is not evaluated, since we treat the effective SRSs of different antennas as distinct input tokens, and thus it is not generalizable to $K$.
We can see that the PA-GNN achieves better generalization performance than GNN-1D2D and GNN-2D.

\begin{table}[htbp]
\caption{Generalization ability to $K$, $N = 16$.}
\vspace{-5mm}
\label{tab_dim_gen_user}
\begin{center}
\begin{tabular}{|c|c|c|c|c|c|}
\hline
$K$ & 4 & 6 & 8 (Train) & 10 & 12\\
\hline
\textbf{Ref. Perf.} & \multirow{2}{*}{14.7} & \multirow{2}{*}{18.1} & \multirow{2}{*}{20.2} & \multirow{2}{*}{21.4} & \multirow{2}{*}{22.0}\\
\textbf{(bps/Hz)} & & & & & \\
\hline
\textbf{PA-GNN} & \textbf{90.3\%} & \textbf{100.0\%} & \textbf{100.0\%} & \textbf{99.4\%} & \textbf{98.1\%}\\
\hline
\textbf{GNN-1D2D} & 86.1\% & 92.8\% & 91.5\% & 89.9\% & 88.1\%\\
\hline
\textbf{GNN-2D} & 53.0\% & 90.6\% & 90.6\% & 90.0\% & 89.0\%\\
\hline
\end{tabular}
\end{center}
\end{table}\vspace{-2mm}

In Table \ref{tab_dim_gen_antenna}, we compare the generalizability of the GNNs to the number of antennas, where the GNNs are trained with the samples in a scenario of $N = 16$.
The GNN-1D2D is not evaluated, since it is not generalizable to $N$.
It is shown that both of the PA-GNN and GNN-2D can be well generalized except the case of $N = K = 4$, and the PA-GNN performs slightly better.
The generalization performance of the Transformer is worse.

\begin{table}[htbp]
\caption{Generalization ability to $N$, $K = 4$.}
\vspace{-5mm}
\label{tab_dim_gen_antenna}
\begin{center}
\begin{tabular}{|c|c|c|c|c|c|}
\hline
$N$ & 4 & 8 & 16 (Train) & 32 & 64\\
\hline
\textbf{Ref. Perf.} & \multirow{2}{*}{5.9} & \multirow{2}{*}{9.9} & \multirow{2}{*}{14.7} & \multirow{2}{*}{20.5} & \multirow{2}{*}{24.8}\\
\textbf{(bps/Hz)} & & & & & \\
\hline
\textbf{PA-GNN} & \textbf{86.1\%} & \textbf{91.9\%} & \textbf{100.0\%} & \textbf{93.2\%} & \textbf{92.7\%}\\
\hline
\textbf{GNN-2D} & 85.8\% & 91.1\% & 95.1\% & 90.0\% & 90.7\%\\
\hline
\textbf{Transformer} & 75.0\% & 87.2\% & 91.2\% & 87.0\% & 88.7\% \\
\hline
\end{tabular}
\end{center}
\end{table}

\subsubsection{Learned Attention Matrix}

To understand the behavior of the PA-GNN for learning E2E precoding policy, we virtualize the attention matrix $\mathbf{A}^{(l)}$ learned by the PA-GNN in the form of heat map, as shown in Fig.~\ref{fig_attention_matrix}.

\begin{figure}[htbp]
\centering
\includegraphics[width=0.9\linewidth]{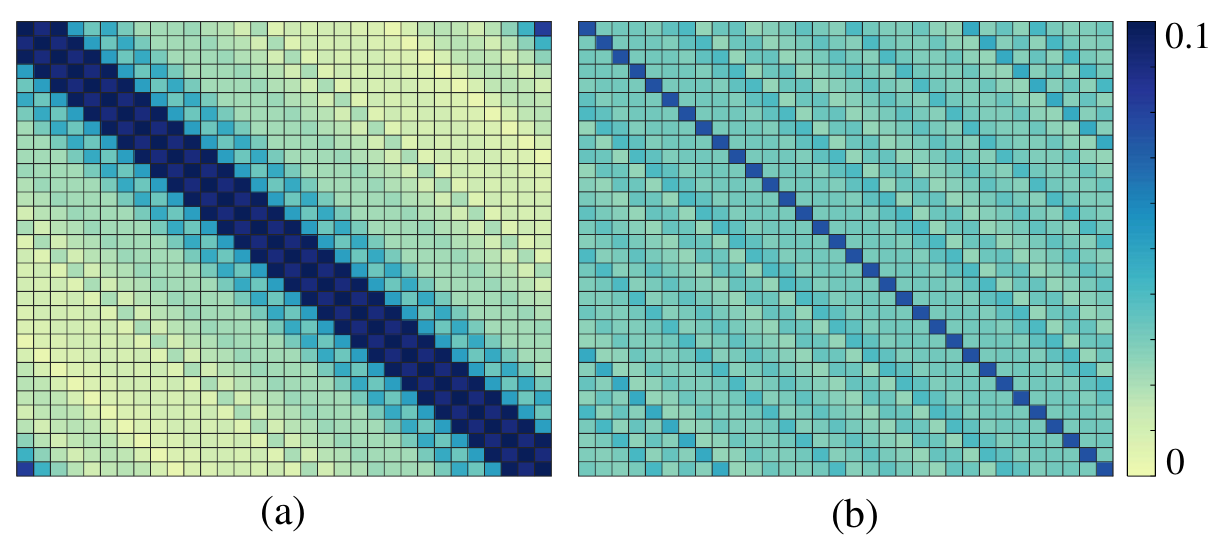}\vspace{-2mm}
\caption{Attention matrices learned by the PA-GNN in the first layer (a) and the last layer (b).}
\label{fig_attention_matrix}
\end{figure}

As can be seen, in the first layer of the PA-GNN, the elements near the diagonal are much larger, exhibiting a structure similar to that shown in Fig.~\ref{fig_cor_mat_structure}(b). This indicates that the network mainly aggregates information from nearby antennas. Such a behavior is consistent with the spatial correlation induced by the antenna array geometry, where antennas that are physically close tend to experience similar propagation environments. Consequently, the positional embedding plays an important role in guiding the aggregation process in the early layers.

In contrast, in the last layer, the off-diagonal elements of the learned attention matrix take similar values, resulting in a structure close to that shown in Fig.~\ref{fig_cor_mat_structure}(a). This suggests that the dependence on positional embeddings weakens in deeper layers. This can be explained by the structure of the E2E precoding policy, which can be viewed as a composite of the channel estimation/prediction policy and the precoding policy. The latter, as discussed in \textbf{P2}, is distribution-independent. As a result, the subsequent layers mainly learn the PE precoding policy and therefore do not require positional information.

\section{Conclusion}

In this paper, we proposed a PA-GNN for learning distribution-dependent policies in MU-MISO systems.
We first analyzed how the mathematical expectation in the objective of wireless problems lead to policies whose permutation properties depend on the underlying distribution.
Then, we proposed a novel attention mechanism where the attention coefficients are determined by the relative positions of vertices, allowing the GNN to represent permutation non-equivariant functions.
We applied the PA-GNN to two tasks, i.e.,  channel estimation and E2E precoding, whose policies are not PE to antennas under spatially correlated channels, while the same idea can be extended to other dimensions, such as the subcarrier dimension in multi-carrier systems.
Simulation results demonstrated the superiority of the PA-GNN over existing numerical and learning-based methods in terms of performance, and it requires fewer training samples than other learning-based counterparts.
Furthermore, the PA-GNN can be generalized to systems with varying numbers of users and antennas.

\appendices

\section{Proof of Proposition \ref{proposition_pe}}\label{app_proposition_pe}
We take the permutation to the antennas as an example for the proof.
Since it has been proved in \cite{eisen2020optimal} that a policy is PE if the objectives and the constraints are PI, we only need to prove that $U(\mathbf{X}, \mathbf{Y}) = \mathbb{E}_{\mathtt{Z} | \mathtt{X}}[U^\prime(\mathbf{Y}, \mathbf{Z}) | \mathbf{X}]$ is PI to antennas.

When $\mathbf{X}$ and $\mathbf{Y}$ are permuted along the antenna dimension by $\boldsymbol{\Pi}_\mathrm{A}$, we have
\begin{IEEEeqnarray*}{rl}
& U(\boldsymbol{\Pi}_\mathrm{A}^\mathsf{T}\mathbf{X}, \boldsymbol{\Pi}_\mathrm{A}^\mathsf{T}\mathbf{Y})\\
= & \mathbb{E}_{\mathtt{Z} | \mathtt{X}}\left[U^\prime(\boldsymbol{\Pi}_\mathrm{A}^\mathsf{T}\mathbf{Y}, \mathbf{Z}) | \boldsymbol{\Pi}_\mathrm{A}^\mathsf{T}\mathbf{X}\right]\\
\overset{\text{(a)}}{=} & \int U^\prime(\boldsymbol{\Pi}_\mathrm{A}^\mathsf{T}\mathbf{Y}, \mathbf{Z}) P_{\mathtt{Z} | \mathtt{X}}(\mathbf{Z} | \boldsymbol{\Pi}_\mathrm{A}^\mathsf{T}\mathbf{X}) \ d\mathbf{Z}\\
\overset{\text{(b)}}{=} & \int U^\prime(\boldsymbol{\Pi}_\mathrm{A}^\mathsf{T}\mathbf{Y}, \boldsymbol{\Pi}_\mathrm{A}^\mathsf{T}\mathbf{Z}^\prime) P_{\mathtt{Z} | \mathtt{X}}(\boldsymbol{\Pi}_\mathrm{A}^\mathsf{T}\mathbf{Z}^\prime | \boldsymbol{\Pi}_\mathrm{A}^\mathsf{T}\mathbf{X}) \ d\boldsymbol{\Pi}_\mathrm{A}^\mathsf{T}\mathbf{Z}^\prime\\
\overset{\text{(c)}}{=} & \int U^\prime(\mathbf{Y}, \mathbf{Z}^\prime) P_{\mathtt{Z} | \mathtt{X}}(\mathbf{Z}^\prime | \mathbf{X}) \ d\mathbf{Z}^\prime = U(\mathbf{X}, \mathbf{Y}),
\end{IEEEeqnarray*}
where (b) is obtained by substituting $\mathbf{Z}^\prime \triangleq \boldsymbol{\Pi}_\mathrm{A} \mathbf{Z}$, and (c) holds because $U^\prime(\mathbf{Y}, \mathbf{Z})$ and $P_{\mathtt{Z} | \mathtt{X}} (\mathbf{Z} | \mathbf{X})$ are PI to antennas and the permutation matrices  are orthogonal matrices.

In the derivation above, the permutations $\boldsymbol{\Pi}_\mathrm{A}^\mathsf{T} \mathbf{X}$ and $\boldsymbol{\Pi}_\mathrm{A}^\mathsf{T} \mathbf{Y}$ represent that the measurements and decisions associated with different antennas are permuted, rather than a mere reordering of antenna indices.
Consequently, the conditional PDF $P_{\mathtt{Z} | \mathtt{X}} (\cdot | \cdot)$ used in step (a) is the same as that defined in \eqref{eq_general_optimization}.
Otherwise, if the permutation is caused by index reordering, the conditional PDF will transform into a permuted version $P_{\mathtt{Z}^\pi | \mathtt{X}^\pi} (\mathbf{z} | \mathbf{x}) \triangleq P_{\mathtt{Z} | \mathtt{X}} (\boldsymbol{\Pi}_\mathrm{A}^\mathsf{T}\mathbf{z} | \boldsymbol{\Pi}_\mathrm{A}^\mathsf{T}\mathbf{x})$ accordingly as in \eqref{eq_pdf_permutation_relation}.
In that case, the invariance of the utility always holds and does not rely on the PI properties of $U^\prime(\mathbf{Y}, \mathbf{Z})$ and $P_{\mathtt{Z} | \mathtt{X}} (\mathbf{Z} | \mathbf{X})$, as the physical scenario is identical after reordering antenna indices.

The proof for the permutation of users is similar.

\section{Proof of Proposition \ref{prop_pe_ant}}\label{proof_prop_pe_ant}

If $P_{\mathtt{h}_\mathrm{DL} | \mathtt{h}_\mathrm{UL}}(\cdot | \cdot)$ is PI to antennas, according to Bayesian theorem, we have
\begin{IEEEeqnarray*}{rl}
& {P}_{\mathtt{h}_\mathrm{UL} | \mathtt{r}}(\boldsymbol{\Pi}_\mathrm{A}\mathbf{h}_\mathrm{UL} | \boldsymbol{\Pi}_\mathrm{A}\mathbf{r})\\
= & \frac{{P}_{\mathtt{r} | \mathtt{h}_\mathrm{UL}}(\boldsymbol{\Pi}_\mathrm{A}\mathbf{r} | \boldsymbol{\Pi}_\mathrm{A}\mathbf{h}_\mathrm{UL}){P}_{\mathtt{h}_\mathrm{UL}}(\boldsymbol{\Pi}_\mathrm{A}\mathbf{h}_\mathrm{UL})}{{P}_{\mathtt{r}}(\boldsymbol{\Pi}_\mathrm{A}\mathbf{r})}\\
\overset{\text{(a)}}{=} & \frac{{P}_{\mathtt{n}}\left(\boldsymbol{\Pi}_\mathrm{A}(\mathbf{r} - \mathbf{h}_\mathrm{UL})\right){P}_{\mathtt{h}_\mathrm{UL}}(\boldsymbol{\Pi}_\mathrm{A}\mathbf{h}_\mathrm{UL})}{\int{P}_{\mathtt{n}}\left(\boldsymbol{\Pi}_\mathrm{A}(\mathbf{r} - \mathbf{h}_\mathrm{UL})\right){P}_{\mathtt{h}_\mathrm{UL}}(\boldsymbol{\Pi}_\mathrm{A}\mathbf{h}_\mathrm{UL})\ d\mathbf{h}_\mathrm{UL}}\\
\overset{\text{(b)}}{=} & \frac{{P}_{\mathtt{n}}\left(\mathbf{r} - \mathbf{h}_\mathrm{UL}\right){P}_{\mathtt{h}_\mathrm{UL}}(\mathbf{h}_\mathrm{UL})}{\int{P}_{\mathtt{n}}\left(\mathbf{r} - \mathbf{h}_\mathrm{UL}\right){P}_{\mathtt{h}_\mathrm{UL}}(\mathbf{h}_\mathrm{UL})\ d\mathbf{h}_\mathrm{UL}}\\
= & \frac{{P}_{\mathtt{r} | \mathtt{h}_\mathrm{UL}}(\mathbf{r} | \mathbf{h}_\mathrm{UL}){P}_{\mathtt{h}_\mathrm{UL}}(\mathbf{h}_\mathrm{UL})}{{P}_{\mathtt{r}}(\mathbf{r})} = {P}_{\mathtt{h}_\mathrm{UL} | \mathtt{r}}(\mathbf{h}_\mathrm{UL} | \mathbf{r}),\IEEEyesnumber\label{eq_pe_cpdf}
\end{IEEEeqnarray*}
where (a) holds due to \eqref{eq_avg_effective_srs}, and (b) holds due to the assumption of ${P}_{\mathtt{h}_\mathrm{UL}}(\mathbf{h}_\mathrm{UL}) = {P}_{\mathtt{h}_\mathrm{UL}}(\boldsymbol{\Pi}_\mathrm{A}\mathbf{h}_\mathrm{UL})$.

If $P_{\mathtt{h}_\mathrm{DL} | \mathtt{h}_\mathrm{UL}}(\cdot | \cdot)$ is also PI to antennas, we have
\begin{IEEEeqnarray*}{rl}
& P_{\mathtt{h}_\mathrm{DL} | \mathtt{r}}(\boldsymbol{\Pi}_\mathrm{A}^\mathsf{T}\mathbf{h}_\mathrm{DL} | \boldsymbol{\Pi}_\mathrm{A}^\mathsf{T}\mathbf{r})\\
= & \int P_{\mathtt{h}_\mathrm{DL} | \mathtt{h}_\mathrm{UL}}(\boldsymbol{\Pi}_\mathrm{A}^\mathsf{T}\mathbf{h}_\mathrm{DL} | \mathbf{h}_\mathrm{UL}) P_{\mathtt{h}_\mathrm{UL} | \mathtt{r}}(\mathbf{h}_\mathrm{UL} | \boldsymbol{\Pi}_\mathrm{A}^\mathsf{T}\mathbf{r})\ d\mathbf{h}_\mathrm{UL}\\
\overset{\text{(a)}}{=} & \int P_{\mathtt{h}_\mathrm{DL} | \mathtt{h}_\mathrm{UL}}(\boldsymbol{\Pi}_\mathrm{A}^\mathsf{T}\mathbf{h}_\mathrm{DL} | \boldsymbol{\Pi}_\mathrm{A}^\mathsf{T}\mathbf{h}_\mathrm{UL}^\prime) \\
& \quad \quad \times P_{\mathtt{h}_\mathrm{UL} | \mathtt{r}}(\boldsymbol{\Pi}_\mathrm{A}^\mathsf{T}\mathbf{h}_\mathrm{UL}^\prime | \boldsymbol{\Pi}_\mathrm{A}^\mathsf{T}\mathbf{r})\ d\boldsymbol{\Pi}_\mathrm{A}^\mathsf{T}\mathbf{h}_\mathrm{UL}^\prime\\
\overset{\text{(b)}}{=} & \int P_{\mathtt{h}_\mathrm{DL} | \mathtt{h}_\mathrm{UL}}(\mathbf{h}_\mathrm{DL} | \mathbf{h}_\mathrm{UL}^\prime) P_{\mathtt{h}_\mathrm{UL} | \mathtt{r}}(\mathbf{h}_\mathrm{UL}^\prime | \mathbf{r})\ d\mathbf{h}_\mathrm{UL}^\prime \\
= & P_{\mathtt{h}_\mathrm{DL} | \mathtt{r}}(\mathbf{h}_\mathrm{DL} | \mathbf{r}),
\end{IEEEeqnarray*}
where (a) is obtained by substituting $\mathbf{h}_\mathrm{UL}^\prime \triangleq \boldsymbol{\Pi}_\mathrm{A}^\mathsf{T} \mathbf{h}_\mathrm{UL}$, and (b) holds due to the PI assumption of $P_{\mathtt{h}_\mathrm{DL} | \mathtt{h}_\mathrm{UL}}(\cdot | \cdot)$.

\bibliographystyle{IEEEtran}
\bibliography{reference}

\end{document}